\newtheorem{theorem}{Theorem}
\newcommand{\ket}[1]{\left|#1\right\rangle}
\newcommand{\bra}[1]{\left\langle #1\right|}
\newcommand{\bracket}[1]{\left\langle #1\right\rangle}
\newcommand{\norm}[1]{\left\lVert#1\right\rVert}
\begin{document}

\title{Efficient qubit phase estimation using adaptive measurements}

\author{Marco A. Rodr\'iguez-Garc\'ia}
\email{marg@ciencias.unam.mx}
\affiliation{Instituto de Investigaciones en Matemáticas Aplicadas y en Sistemas, Universidad Nacional Autónoma de México, Ciudad Universitaria, Ciudad de México 04510, Mexico}

\author{Isaac P\'erez Castillo}
\email{iperez@izt.uam.mx}
\affiliation{Departamento de F\'isica, Universidad Autónoma Metropolitana-Iztapalapa, San Rafael Atlixco 186, Ciudad de México 09340, Mexico}

\author{P. Barberis-Blostein}
\email{pbb@iimas.unam.mx}
\affiliation{Instituto de Investigaciones en Matemáticas Aplicadas y en Sistemas, Universidad Nacional Autónoma de México, Ciudad Universitaria, Ciudad de México 04510, Mexico}
\maketitle

\begin{abstract}
 Estimating correctly the quantum phase of a physical system is a
  central problem in quantum parameter estimation theory due to its
  wide range of applications from quantum metrology to cryptography.
  Ideally, the optimal quantum estimator is
  given by the so-called quantum Cram\'er-Rao bound, so any
  measurement strategy aims to obtain estimations as close as possible
  to it. However, more often than not, the current state-of-the-art
  methods to estimate quantum phases fail to reach this bound as they
  rely on maximum likelihood estimators of non-identifiable likelihood
  functions. In this work we thoroughly review various schemes for
  estimating the phase of a qubit, identifying the underlying problem
  which prohibits these methods to reach the quantum Cram\'er-Rao
  bound, and propose a new adaptive scheme based on covariant
  measurements to circumvent this problem. Our findings are carefully
  checked by Monte Carlo simulations, showing that the method we
  propose is both mathematically and experimentally more realistic and
  more efficient than the methods currently available.
  \end{abstract}

\section{Introduction}
The aim of statistical estimation theory in classical systems is to
estimate a probability distribution based on a series of observations.
More precisely, given a statistical model
$\mathcal{P} =\left\{ p(x \mid \theta) \mid \theta \in \Theta\subset
  \mathbb{R}^n \right\}$ and a series of observations
$\{X_1,\ldots, X_N\}$ generated by a particular distribution density
$p(x \mid \theta)$, the aim is to design an estimator
$\hat{\theta}(X_1,,\ldots, X_N)$ which is as close as possible to the
parameter $\theta$, measured in terms of a cost function, usually the
Mean Square Error (MSE), between $\hat{\theta}$ and $\theta$. A fairly
remarkable result, the celebrated Cram\'er–Rao bound (CRB), states
that the variance of an unbiased estimator, is at least as high as the
inverse of the Fisher information of $p(x \mid \theta)$. The
estimators that saturate this bound are called efficient. An example of an 
efficient estimator is the maximum likelihood, that saturates the
CRB when the number of measurements tends to infinity.

This classical result has, however, some subtleties, namely: for the
maximum likelihood estimator to be consistent (i.e. the estimator
converges to the parameter when the number of outcomes tends to
infinity), the likelihood function must be identifiable (i.e. the
likelihood has a unique global maximum), $\Theta$ must be a compact
set, and all probability distributions in the statistical model must
have the same support. Statistical models satisfying the previous
conditions are called regular models. 

While the actual role that the process of measurement plays in
extracting information from a classicaly system is not particularly
relevant, the situation is completely different in a quantum system,
making the process of parameter estimation mathematically more
involved. Suppose $\rho$ is an unknown density operator acting on a
Hilbert space $\mathcal{H}$, that represents the state of the system.
Let us assume that $\rho$ belongs to a particular subset of
parametrized states
\begin{equation}
  \label{QSM}
  \mathcal{S} = \left\{ \rho_{\theta}; \theta  \in \Theta \subset \mathbb{R}^{n} \right\}\,.
\end{equation}
The parametrized states $\rho_{\theta}$ are the result of the
evolution of an initial probe state $\rho$ by a trace-preserving
dynamical process dependent upon a parameter $\theta$. Note that
different initial conditions give rise to different subsets of
parametrized states.

Quantum parameter estimation aims to produce the best estimation for $\theta$ under the premise that the system is prepared in a state of the parametric family $\mathcal{S}$. Notice that, unlike its classical counterpart, quantum estimation consists of two parts:  the choice of the measurements to perform on the system, and then the processing of their outcomes through an estimator $\hat{\theta}$. Note that measured data is a set of random outcomes with a probability distribution depending on the parameter, and an estimator is a random variable whose outcomes estimates the parameter.  Thus, the pair measurement-estimator forms a strategy of estimation in the quantum case one can play around with seeking for an optimal estimation strategy. Indeed, each choice of measurement defines a classical statistical model, whose lowest possible CRB maximises the Fisher information. Thus the maximum of all possible Fisher informations, over the space of measurements allowed by the postulates of Quantum Mechanics, is called the Quantum Fisher Information (QFI), and its inverse, the Quantum Cram\'er-Rao Bound (QCRB). In other words, the QCRB is the minimum over all possible MSE of any possible estimation strategy allowed by Quantum Mechanics. If the Fisher information for a particular choice of measurement $M$ is equal to the QFI, this measurement is called optimal.\\
One way to find an optimal measurement is to use the set of  operators $\left\{ M(j) \right\}$  to represent the measurement, where each $M(j)$ corresponds to  a projector onto each of the  eigenspaces of the symmetric logarithmic derivative (SLD) $\mathcal{L}(\theta)$ \cite{Braunstein1994},  defined by
\begin{equation}
  \label{eq:SLD}
  \frac{\partial \rho_{\theta}}{\partial \theta} = \frac{1}{2}\left(\mathcal{L}(\theta)\rho_{\theta} + \rho_{\theta}\mathcal{L}(\theta) \right)\,.
\end{equation}

It turns out that, in general, this measurement is so-called locally
optimal. This means that the optimal measurement depends on the actual
value of the parameter we want to estimate and the classical Fisher
information given by that measurement is a local maximum in the parameter
space that characterizes the measurement. Thus estimating the parameter
using a locally optimal measurement is rather impractical. Two
approaches have been developed to overcome this problem. The first one
is based on adaptive estimation schemes which updates a guess for a
locally optimal measurement \cite{Fujiwara2011,
  Barndorff-Nielsen2000}. The second method seeks a set of initial
conditions that do not depend on the unknown parameter
\cite{Chapeau2016, Toscano2017}. Both methods are based on the maximum
likelihood estimator (MLE) which implies that in order to obtain it, the aforementioned set of regularity conditions must be met
\cite{Robert, Lehmann1998,CaseBerg:01}. We will see that the MLE
derived from optimal measurements may fail to satisfy these conditions
and thus these two approaches fail to attain the QCRB.

The problem of non-identifiable likelihood functions often appears  when estimating the quantum phase of a system, a particular problem with a wide range of applications from quantum metrology to cryptography, as many problems can be recast into a quantum phase estimation \cite{Jon,  PhysRevA.73.033821, Rock}. In this framework, when trying to use locally optimal measurements \cite{Giovann2004, Frowis2014, Pezze2014, Huang2017,Toscano2017}, it may happen that adaptive schemes  fail to reach the QCRB \cite{Fuji2} due to the fact that regularity conditions are not met \cite{Fujiwara2011}. A paradigmatic example of non-identifiablity is provided when trying to perform qubit phase estimation which are, in general, unable to reach the QCRB \cite{Chapeau2016,Barndorff-Nielsen2000}. \\
The main goal of the present work is to introduce a new adaptive
scheme for quantum phase estimation which saturates the QCRB
independently of the initial condition that generates $\mathcal{S}$. We will first show that the Adaptive Quantum State Estimation method (AQSE), a general method for parameter estimation \cite{Yamagata2013AsymptoticQS}, does not achieve the QCRB when applied to a qubit phase estimation.  The reason why AQSE does not converges is that the statistical model, built with the measurement that maximizes the Fisher information (the locally optimal measurement), is not identifiable. We will further show that the non-identifiability problem is solved by taking a sample of measures from a covariant measurement. The covariant measurement is identifiable and is chosen to minimize the MSE for one measure. With the covariant sample, we can then construct a confidence interval where the underlying statistical model is now regular. Then we will apply the AQSE method inside the confidence interval.\\
The paper is organized as follows. We start by giving a brief review
of quantum estimation theory with special emphasis in covariant
estimation techniques for periodic parameter estimation
(section~\ref{lb:quantum_parameter_estimation}). Then, we review
different estimation strategies for the phase of a qubit and discuss
their weakness and strengths, in particular we discuss the effects of
non-identifiable probabilistic models in the estimation error
(section~\ref{lb:optimal_qubit_phase_estimation}). Once we understand
the problems with different estimation strategies, we present a two-step estimation scheme. Numerical simulations suggest that this scheme
is able to reach the QCRB (section
\ref{lb:arbitrary_initial_condition}). We end the paper with a summary
of the main results and our conclusions (section
\ref{summary_conclusions}).

\section{Quantum parameter estimation}
\label{lb:quantum_parameter_estimation}
Recall that, given a set of independent observations $\{x_1,\ldots, x_N\}$ from
a random variable $X$ with probability density $p(x \mid \theta)$, the
likelihood function is defined as
\begin{equation}
  \label{eq:likelihood_def}
  L(\theta \mid x_1,...,x_N) = \prod_{i=1}^{N} p(x_i \mid \theta).
\end{equation}
From here, we can derive the maximum likelihood estimator for $\theta$, whose
mean square error obeys the CRB \cite{Lehmann1998} 
\begin{equation}
  \label{eq:CRB}
  \text{MSE}(\hat{\theta}) \geq \frac{1}{N F(\theta; X)}\, ,
\end{equation}  
where the MSE is defined by
\begin{equation}
  \label{eq:mse}
\text{MSE}(\hat{\theta}) =  E_{\theta}\left[ \left(\hat{\theta} - \theta \right)^2 \right]\,,
\end{equation}
and $F(\theta; X)$ is the Fisher information:
\begin{equation}
  \label{eq:fi}
  \begin{split}
    F(\theta; X) = E_{\theta} \left[\left(\frac{\partial }{\partial
          \theta}\log p(x \mid \theta)\right)^2\right].
  \end{split}  
\end{equation}
Intuitively, the Fisher information  quantifies how much information carries a
sample about the unknown parameter.  The ultimate aim in classical parameter
estimation is to find the estimator that achieves the Cramér-Rao bound. If the
statistical model $\left\{ p(x \mid \theta) \mid \theta \in \Theta \right\}$ for a random variable $X$ is regular the maximum likelihood estimator produced by a sequence of $N$ independent and identically distributed of $X$ can attain the Cram\'er-Rao bound   asymptotically for large $N$ \cite{Lehmann1998, Robert}.\\
In the quantum case, we  must first describe the process of measuring the system. This is better achieved by using the concept   of Positive Operator-Valued Measures (POVMs).  A POVM with outcomes on a set $\mathcal{X}$ is a family of bounded positive operators
\begin{equation*}
M= \left\{M(B); B \text{ is a Borel set in } \mathcal{X} \right\}
\end{equation*}
acting over the Hilbert space of the system $\mathcal{H}$ such that $M(\mathcal{X}) = I$. When the state of the system is given by the density operator $\rho_{\theta}$, a POVM $M$ specifies the conditional probability of obtaining the event $B$ by  Born's rule
\begin{equation}
  \label{eq:br}
  P\left(B \mid \theta  \right) = \text{Tr}\left[M(B) \rho_{\theta} \right].
\end{equation}  
Thus, in quantum parameter estimation, the Fisher information, given by Eq. (\ref{eq:fi}), is a function of the POVMs and, henceforth, we will denote it  as  $F(\theta; M)$. This implies that, according to Eq. \eqref{eq:CRB}, maximising $F(\theta; M)$ over the set of POVMs gives the lowest CRB. This results into the so-called   Quantum Fisher Information (QFI), which we will denote as  $F_Q(\theta)$, and its lowest bound is so-called the Quantum Cram\'er-Rao Bound (QCRB) \cite{Helstrom1969, Holevo1973, Holevo2011}.\\
For the particular case of a scalar parameter estimation, the QFI has the following form \cite{Helstrom1969, Holevo2011, Braunstein1994}
\begin{equation}
\label{Fq}  
  F_Q(\theta) = \text{Tr}\left[\rho_{\theta}\mathcal{L}(\theta)^2 \right],
\end{equation}  
where  $\mathcal{L}(\theta)$  is the symmetric logarithmic derivative, also called quantum score, defined by Eq. \eqref{eq:SLD}. When the system is in a pure state $\rho_\theta = \ket{\psi_{\theta}}\bra{\psi_{\theta}} $,  the quantum score is easy to calculate \cite{Paris2009}, obtaining $\mathcal{L}(\theta)=2\frac{\partial \rho_\theta}{\partial \theta}$, so that the QFI becomes:
\begin{equation}
  F_{Q}(\theta)=4 ~\text{Tr} \left[ \left( \frac{\partial }{\partial \theta} \left( \ket{\psi_{\theta}}\bra{\psi_{\theta}} \right)   \right)^2  \ket{\psi_{\theta}}\bra{\psi_{\theta}}    \right].
    \label{eq:qfi_pure}
  \end{equation}
 Notice that the set of  POVMs  $\left\{M_{\theta}(j) \right\}$
 constructed as the projections of the quantum score $\mathcal{L}(\theta)$ do
 depend on the parameter $\theta$ one is trying to infer
 \cite{Braunstein1994}. A natural way around this is to introduce
 adaptive schemes to estimate $\theta$. One  approach relies on
 adaptive quantum estimation schemes based  on locally optimal POVMs
 that could, in principle, asymptotically  construct the optimal POVM
 without knowing the parameter beforehand  \cite{Nagaoka,
   Fujiwara2011, Berry2001, Huang2017}. Nevertheless, a  set of
 precise mild regularity conditions for each statistical model
 involved in the method are required. For instance, in the adaptive
 quantum state estimation (AQSE) method \cite{Nagaoka, Fujiwara2011},
 it is necessary to assume regular statistical models for every
 measurement to guarantee a consistent and efficient estimator.
 However, for estimation problems on which the quantum parameter is
 periodic, as it is the case for phase estimation, the  likelihood
 functions produced by locally optimal POVMs are not identifiable
 \cite{Chapeau2016,Berry2009}. Consequently, there is no mathematical
 reason that ensures  the saturation of the CRB. 
The second approach searches specific initial conditions, for which the optimal  POVM does not depend on the unknown value of $\theta$ \cite{Toscano2017,    Chapeau2016}.  Hence, in principle,  performing an extensive independent  sequence of this POVM, along with the maximum likelihood estimator, it is  possible to achieve the quantum Cramér-Rao bound. Nonetheless, for periodic  parameter estimation, this method  often produces non-identifiable  likelihood functions, and  as a consequence, the CRB is not attained.\\
We will show now that, for  the case  of quantum  phase estimation, we can solve the problem of non-identifiable  likelihood functions by using the formalism of covariant estimation, which  considers the symmetries of the system and can saturate the  QCRB under specific initial conditions \cite{Holevo2011}.

\subsection{Covariant estimation}
The following discussion is based on \cite{Holevo2011}.
In quantum parameter estimation, each $\rho_{\theta} \in \mathcal{S}$ obtains its parametric dependency through a physical transformation $U_\theta$. Usually, the set of $\left\{ U_{\theta} \right\}_{\theta \in \Theta}$ forms a group, which induces an action over $\mathcal{S}$.  When the family $\mathcal{S}$ is invariant under the conjugation of $U_{\theta}$, for any $\theta \in \Theta$, we say that the problem of estimation involves a symmetry. The problem of quantum parameter estimation  involving symmetries is called covariant. In the following,  we summarise the approach of covariant estimation.\\
Let $G$ be a locally compact Lie group acting transitively over the parametric set $\Theta$. Thus for any base point $\theta_0 \in \Theta$, we have
\begin{equation}
  \Theta = \theta_0 \cdot G \cong G/K\,, 
\end{equation}
where $K$ is the stabilizer subgroup of $\theta_0$. According to Wigner's theorem \cite{Wigner}, the Hilbert space $\mathcal{H}$ of the system  has a unitary $G$-representation $U=\left\{U_g\right\}_{g \in G}$. Hence, the probe states can be transformed by the conjugation
\begin{equation}
  \label{eq:rho_t}
  \rho \mapsto U_{g} \rho U_{g}^{\dagger}\,,
\end{equation}
where $U_g$ is an element of the $G$-unitary representation of $\mathcal{H}$. Specifically, we say that we have a quantum covariant estimation problem whenever
\begin{equation}
  \label{eq:cep}
\left\{ U_{g} \rho_{\theta} U_{g}^{\dagger} \mid  \theta \in \Theta \right\} = \mathcal{S} \quad \forall g \in G\,.
\end{equation}
In this framework, a POVM $M$ on the system $\mathcal{H}$ taking values in $\Theta$ is called \emph{covariant} with respect the $G$-unitary representation $\left\{U_g \right\}$, if for every event $B$ and $g \in G$, it satisfies that
\begin{equation}
  \label{eq:covPOVM_def}
  M(Bg^{-1}) = U_g^{\dagger} M(B) U_g,
\end{equation}
where $Bg^{-1} = \left\{\theta \mid \theta = g \theta', \theta' \in B \right\}$. Note that this condition is equivalent to
\begin{equation}
  \label{eq:probcov_def}
  P\left(B \mid \rho \right) = P\left(gB \mid U_{g} \rho U_{g}^{\dagger}\right)\,.
\end{equation}
We will denote as $\mathcal{M}\left(\Theta\right)$ the set of POVMs with outcomes in $\Theta$. Similarly, we will denote as as $\mathcal{M}\left(\Theta, U \right)$ the set of the covariant POVMs with respect the $G$-unitary representation.\\
The class of covariant POVMs takes advantage of group symmetries. Specifically, when the outcomes of the measurement are considered as the estimates, and the cost function $c(\hat{\theta},\theta) $ under consideration is $G$-invariant, that is $c(\hat{\theta},\theta) = c(g\hat{\theta}, \theta)~ \forall g \in G$, one has that
\begin{equation}
  \label{eq:costinv}
 \begin{split} 
   E_{g\theta, M}\left[ c(\hat{\theta}, g\theta) \right] &=  \int_{\Theta} c(\hat{\theta}, g \theta) Tr\left[ M(d\hat{\theta}) \rho_{g \theta} \right] =  \int_{\Theta} c(g \hat{\theta'}, g \theta) Tr\left[ M(d\hat{\theta'}) \rho_{\theta} \right] \\
   &= E_{\theta,M}\left[ c(\hat{\theta}, \theta) \right], \, \forall \, M \in  \mathcal{M}\left(\Theta, U\right) \text{ and } \hat{\theta}' := g^{-1}\hat{\theta}.
  \end{split} 
\end{equation}
As a result, we can find optimal measurements on $\mathcal{M}\left(\Theta, U\right)$ in the Bayesian approach, by taking the average of $c(\theta,\hat{\theta})$ over a prior probability measure. The following theorem ensures this assertion.

\begin{theorem}[\cite{Holevo1973}]
  \label{t:qhs}
  When $\Theta$ is compact,
  \begin{equation}
    \label{eq:hunt}
    \begin{split}
      \min_{M \in \mathcal{M}\left(\Theta \right)} \int_{\Theta} E_{\theta,M}\left[ c(\hat{\theta}, \theta) \right] \mu(d \theta)
       = \min_{M \in \mathcal{M}\left(\Theta, U \right)} \int_{\Theta} E_{\theta,M}\left[ c(\hat{\theta}, \theta) \right] \mu_{\Theta}(d \theta).
    \end{split}
  \end{equation}
where $\mu_{\Theta}$ is the Haar measure over $\Theta$.  
\end{theorem}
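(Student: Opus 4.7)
The plan is to prove the standard Hunt--Stein type symmetrization result: for every POVM $M \in \mathcal{M}(\Theta)$ one can construct a covariant POVM $\tilde{M} \in \mathcal{M}(\Theta, U)$ with the same Bayes risk against the Haar prior $\mu_\Theta$. Since $\mathcal{M}(\Theta,U) \subset \mathcal{M}(\Theta)$ the inequality ``$\geq$'' is automatic, and the symmetrization gives the reverse inequality, so equality holds. Compactness of $\Theta = G/K$ is used to normalize the invariant measure and to ensure the relevant integrals converge.

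First I would define the symmetrized POVM by averaging $M$ over the group action,
\begin{equation*}
\tilde{M}(B) := \int_G U_g^{\dagger}\, M(g \cdot B)\, U_g\, d\mu_G(g),
\end{equation*}
where $\mu_G$ is a normalized Haar measure on $G$ and the integral is understood in the weak operator sense. Verifying that $\tilde{M}$ is a POVM is immediate: positivity is preserved under integration, and $\tilde{M}(\Theta) = \int_G U_g^{\dagger} I\, U_g\, d\mu_G(g) = I$. Covariance then follows from left-invariance of $\mu_G$: a substitution $g \mapsto hg$ gives
\begin{equation*}
U_h^{\dagger} \tilde{M}(B) U_h = \int_G U_{hg}^{\dagger}\, M(g \cdot B)\, U_{hg}\, d\mu_G(g) = \int_G U_{g'}^{\dagger}\, M(h^{-1} g' \cdot B)\, U_{g'}\, d\mu_G(g') = \tilde{M}(h^{-1} \cdot B),
\end{equation*}
which is Eq.~(\ref{eq:covPOVM_def}) with the notational convention $Bg^{-1} = g \cdot B$ adopted in the paper.

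The core step is equality of Bayes risks. Because $\tilde{M}$ is covariant, Eq.~(\ref{eq:costinv}) and transitivity of $G$ on $\Theta$ imply that the map $\theta \mapsto E_{\theta,\tilde{M}}[c(\hat\theta,\theta)]$ is constant; hence its $\mu_\Theta$-average equals $E_{\theta_0,\tilde{M}}[c(\hat\theta,\theta_0)]$ for any base point $\theta_0$. Substituting the definition of $\tilde{M}$, using $\rho_{g\theta_0} = U_g \rho_{\theta_0} U_g^{\dagger}$, and invoking $G$-invariance of $c$ together with invariance of $\mu_\Theta$ under $g$, a Fubini exchange collapses the $G$-integral and produces precisely $\int_\Theta E_{\theta,M}[c(\hat\theta,\theta)] \mu_\Theta(d\theta)$. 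The main obstacle here is the bookkeeping: one must keep straight which side each $U_g$ acts on, match the group action used in the definition of $\tilde{M}$ with the left action on $\Theta$ that leaves $\mu_\Theta$ invariant, and justify the Fubini step, which is routine given continuity and the trace-class bound $\mathrm{Tr}[M(\Theta)\rho] = 1$. Once these alignments are in place, the infimum on the left-hand side of (\ref{eq:hunt}) is attained on the covariant subclass; weak-$*$ compactness of the POVM set (via compactness of $\Theta$) upgrades the infimum to the minimum stated in the theorem.
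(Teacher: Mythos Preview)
The paper does not prove this theorem; it is quoted as a result of Holevo \cite{Holevo1973} and then used without argument. Your Hunt--Stein symmetrization is the standard route and is essentially what appears in Holevo's original treatment: group-average an arbitrary POVM to produce a covariant one with the same Haar-prior Bayes risk, then use the trivial inclusion $\mathcal{M}(\Theta,U)\subset\mathcal{M}(\Theta)$ for the reverse inequality. So there is nothing in the paper to compare against beyond the citation, and your approach matches the classical proof.

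One minor remark: the printed statement writes $\mu$ on the left and $\mu_\Theta$ on the right; you have read both as the Haar measure, which is the only reading under which the equality can hold and is surely what is intended.
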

The previous result restricts the search for optimal measurements to the class $\mathcal{M}\left(\Theta, U \right)$. This fact is particular useful as, according to the following theorem, the covariant POVMs can be characterised as follows.
\begin{theorem}[\cite{Holevo2011}]
  Let $P_0$ be a Hermitian positive operator on $\mathcal{H}$, commuting with  the operators $\left\{U_g\right\}_{g \in K}$ and satisfying
  \begin{equation}
    \label{eq:pc}
    \int_\Theta  U_{g(\theta)} P_0 U_{g(\theta)}^{\dagger} \mu_{\Theta}(d \theta) = I,
  \end{equation}
  where $g(\theta) \in G$ is any representative element of the equivalence class  $\theta \in \Theta$. Then, a POVM $M(d \theta)$ with outcomes in $\Theta \cong G/K$ is covariant if and only if has the form
  \begin{equation}
    \label{eq:covchar}
    M(d \theta) = U_{g(\theta)} P_0 U_{g(\theta)}^{\dagger} \mu_{\Theta}(d \theta).
  \end{equation}
\end{theorem}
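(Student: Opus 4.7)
The plan is to prove each direction of the biconditional separately. For the "if" direction, assuming $M(d\theta) = U_{g(\theta)} P_0 U_{g(\theta)}^{\dagger} \mu_{\Theta}(d\theta)$, I would first verify that $M$ is well defined as an operator on the quotient $\Theta \cong G/K$: if $g(\theta)k$ is another representative with $k \in K$, then the commutation $[P_0,U_k]=0$ forces $U_{g(\theta)k} P_0 U_{g(\theta)k}^{\dagger} = U_{g(\theta)} P_0 U_{g(\theta)}^{\dagger}$, so the integrand is independent of the choice of representative. Positivity of $M(B)$ follows from the positivity of $P_0$, while $M(\Theta)=I$ is exactly the normalization hypothesis \eqref{eq:pc}. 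The covariance identity \eqref{eq:covPOVM_def} is then obtained by a change of variables $\theta \mapsto g\theta$ inside the defining integral, exploiting the $G$-invariance of the Haar measure $\mu_{\Theta}$ together with the cocycle identity $U_{g\cdot g(\theta)} = U_g U_{g(\theta)}$, which lets the outer conjugation by $U_g$ (or $U_g^{\dagger}$) be pulled outside the integral.

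For the converse, I assume $M$ is a covariant POVM and construct $P_0$ from it. The main step is to show that $M$ admits a density $m(\theta)$ with $M(d\theta)=m(\theta)\mu_{\Theta}(d\theta)$; this is a Radon--Nikodym-type statement for positive operator-valued measures, and it follows from the fact that covariance together with transitivity of the $G$-action makes $M$ absolutely continuous with respect to the Haar measure. Once such an $m$ is available, covariance translates pointwise (almost everywhere) into $m(g\theta) = U_g m(\theta) U_g^{\dagger}$. Fixing any base point $\theta_0 \in \Theta$ and setting $P_0 := m(\theta_0)$, transitivity of the action gives $m(\theta) = U_{g(\theta)} P_0 U_{g(\theta)}^{\dagger}$ for every $\theta$, and evaluating the same relation at $\theta_0$ with $g=k \in K$ yields $P_0 = U_k P_0 U_k^{\dagger}$, i.e. $[P_0, U_k]=0$. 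The normalization \eqref{eq:pc} is then nothing but the identity $M(\Theta)=I$.

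The step I expect to be the main obstacle is precisely this Radon--Nikodym extraction: one has to justify rigorously the existence of a measurable, positive operator-valued density of $M$ with respect to $\mu_{\Theta}$, and then promote the functional equation $m(g\theta)=U_g m(\theta)U_g^{\dagger}$, initially valid only outside a $\mu_{\Theta}$-null set, to a pointwise identity so that $P_0 = m(\theta_0)$ is unambiguously defined and the global formula \eqref{eq:covchar} holds for an arbitrary representative $g(\theta)$. This is where the compactness of $\Theta$, the local compactness of $G$, and the existence of a measurable section of the quotient map $G \to G/K$ all become essential; once these analytic ingredients are secured, the algebraic content of the characterization follows cleanly from the covariance relation and the commutation with the stabilizer.
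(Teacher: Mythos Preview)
The paper does not supply its own proof of this theorem: it is stated as a quotation from Holevo's monograph \cite{Holevo2011} and is used as a black box to parametrize covariant POVMs. Consequently there is no in-paper argument to compare your proposal against.

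That said, your outline is the standard route to this result and matches the structure of Holevo's original proof. The ``if'' direction is exactly as you describe: well-definedness on $G/K$ via $[P_0,U_k]=0$, positivity inherited from $P_0$, normalization from \eqref{eq:pc}, and covariance from the invariance of $\mu_\Theta$ combined with the representation identity $U_{gg(\theta)} = U_g U_{g(\theta)}$. For the converse you have correctly isolated the only nontrivial step, namely the operator-valued Radon--Nikodym theorem producing a measurable density $m(\theta)$ with respect to $\mu_\Theta$; once that density exists, the pointwise covariance relation and the definition $P_0 := m(\theta_0)$ follow formally. Your caveat about promoting the almost-everywhere relation to a pointwise one (so that $P_0$ is unambiguous and \eqref{eq:covchar} holds for every representative) is the right place to be careful, and it is precisely where compactness of $\Theta$ and the existence of a Borel section of $G\to G/K$ enter in Holevo's treatment. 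There is no gap in your plan; it simply reproduces the classical argument that the paper chose to cite rather than repeat.
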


A particular case, which is relevant for quantum phase estimation, is when the stability group is the trivial one, that is when $K=\{e\}$ with $e$ the identity element. Here, we have that $\Theta \cong G$, that is, the parametric space $\Theta$ is on one-to-one correspondence to the elements of the group $G$. This implies, in turn, that $E_{\theta,M}\left[ c(\hat{\theta}, \theta) \right]$ is constant for all $\theta \in \Theta$ and,  as consequence of theorem (\ref{t:qhs}), the optimization of the expectation of the invariant cost function can be restricted to the set of covariant POVMs. Notice that for quantum phase estimation, there is a little caveat: here the  one-parameter symmetry group $G$ isomorphic to $\Theta = [0, 2 \pi )$ but this is  not a compact set. In practice, we can consider it to be compact, assuming that the unknown parameter is an interior point of $\Theta$. Thus, we can analyze the CRB for any covariant POVM.\\
 We proceed to discuss in more detail how these ideas apply to quantum  phase
 estimation for qubit states.\\
 
\section{Phase estimation strategies in a qubit}
\label{lb:optimal_qubit_phase_estimation}
Let $\mathcal{S} = \left\{ \rho_{\theta}; \, \theta \in \Theta = [0,2\pi)\right\}$ be a parametric family of density operators, on a $2$-dimensional Hilbert space $\mathcal{H}$. In this case, each state $\rho_{\theta} \in \mathcal{S}$ represents the state of a qubit. Here, the states $\rho_{\theta}$ obtain their parametric dependency applying an arbitrary unitary transformation
$U_{\theta}$ over a probe state $\rho$ on $\mathcal{H}$, that is
\begin{equation}
  \label{eq:qeq}
  \rho_{\theta} = U_{\theta}\rho U_{\theta}^{\dagger}\,.
\end{equation}
An arbitrary unitary transformation (up to a global phase) on a $2$-dimensional Hilbert space can be written using the generators of the Lie algebra su$(2)$ and has the following form \cite{Eugen,Nielsen2000}
\begin{equation}
  \label{eq:utq}
  U_{\theta} = e^{-i \theta \vec{n} \cdot \frac{\vec{\sigma}}{2}}\,,
\end{equation}
where $\vec{n}=n_x\hat{x}+n_y\hat{y}+n_z \hat{z}\in \mathbb{R}^3$ is a unit vector and $ \vec{\sigma} = \sigma_1 \hat{x} + \sigma_2 \hat{y} + \sigma_3 \hat{z}$, with $\sigma_i$,  for $i = 1,2,3$,   the Pauli matrices. The estimation problem in this case consists in finding the best strategy to estimate the phase $\theta \in \Theta$ of the exponential operator appearing in Eq. (\ref{eq:utq}).\\
To analyse the QCRB in this case it is better to work using the Bloch sphere
representation for a qubit state. The probe state can be written as
\begin{equation}
  \label{eq:qubirbs}
 \rho = \frac{1}{2}\left(I+\vec{a}\cdot \vec{\sigma}\right) \,,\quad\quad \norm{\vec{a}} \leq 1\,,
\end{equation}
and the transformation $U_{\theta}$ can be seen as the rotation of the probe Bloch vector $\vec{a}$ by the angle $\theta$ around the axis $\vec{n}$. We will denote the transformed vector as $\vec{a}(\theta)$, so the transformed qubit state $\rho_{\theta}$ is solely determined by $\vec{a}(\theta)$. One can show that, for this problem, the quantum score takes the following form \cite{Chapeau2016}:
\begin{equation}
  \label{eq:qscorequbit}
  \mathcal{L}(\theta) = \left(\vec{n} \times \vec{a}(\theta) \right) \cdot \vec{\sigma}\,,
\end{equation}
where
\begin{equation}
  \label{eq:atheta}
  \vec{a}(\theta) = \cos(\theta)\vec{a} + \left( \sin(\theta) (\vec{n} \times  \vec{a})
  \right) + \left( 2  (\vec{n} \cdot \vec{a})  \sin^2\left(
      \frac{\theta}{2} \right) \right) \vec{n}\, .
\end{equation}
This implies that, according to  Eq. (\ref{Fq}), the quantum Fisher information becomes
\begin{equation}
  \label{eq:qfq}
F_Q = \norm{\vec{a}}^2 - (\vec{a} \cdot \vec{n})^2\,.
\end{equation}
Thus, if the probe qubit is a pure state (i.e., $\norm{\vec{a}}= 1$) and the
Bloch vector $\vec{a}$ is orthogonal to the rotation axis $\vec{n}$, then
Eq.~(\ref{eq:qfq}) achieves their maximal value $F_Q^{\text{max}} = 1$. We will
refer to this as the optimal initial conditions. Let us proceed to discuss the
different estimation strategies for the phase $\theta$.

\subsection{Locally Optimal POVMs}
\label{lb:LOP}
From the expression of the quantum score, given by Eq.~(\ref{eq:qscorequbit}), it is straightforward obtain that its projection operators are given by:
\begin{equation}
  \label{eq:projqs}
  M_{g}(0) = \frac{1}{2}\left( I + \frac{\left(\vec{n} \times \vec{a}(g) \right)}{\norm{ \vec{n} \times \vec{a}(g)  }}  \cdot \vec{\sigma}   \right)\,,\quad\quad  M_{g}(1) = \frac{1}{2}\left( I - \frac{\left(\vec{n} \times \vec{a}(g) \right)}{\norm{ \vec{n} \times \vec{a}(g)  }}  \cdot \vec{\sigma}   \right)\,,\quad\quad  g \in \Theta\,. 
  \end{equation}
 Hence, the family of POVMs is $M_g = \left\{ M_{g}(0), M_{g}(1) \right\}$, with outcomes in the set $\mathcal{X}= \left\{0,1\right\}$,   yield  the following classical Fisher information
\begin{equation}
  \label{eq:fipj}
F(\theta ; g) = \frac{F_Q \cos^2(\theta- g)}{1-F_Q\sin^2(\theta-g)}\,.  
\end{equation}
From Eq. (\ref{eq:fipj})  it is fairly obvious to realize that $F(\theta, g) = F_Q(\theta)$ whenever $g = \theta$ for any $\theta \in \Theta$. That is, the POVM $M_{\theta}=\left\{ M_{\theta}(0), M_\theta(1) \right\}$ is locally optimal. Note, however,  that if $\vec{a} \perp\vec{n}$ then
\begin{equation}
  \label{eq:fipjmax}
F(\theta ; g) = \frac{ \cos^2(\theta- g)}{1-\sin^2(\theta-g)} = 1 = F_{Q}^{\text{max}}.  
\end{equation}
In other words, when the probe state is prepared in the optimal initial
conditions, the POVM $M_g$ is optimal, independent of $\theta \in \Theta$.
Therefore, any POVM $M_g$ is optimal.

At this point, one may think that a POVM $M_g$ produces an \textit{optimal
  estimation strategy} in the sense that if we were to perform a sequence of $N$
independent measurements using the maximum likelihood estimator, it should be
possible to saturate the QCRB in the asymptotic regime. This is, however, false:
the POVM $M$ yields a non-identifiable likelihood function, and thus the maximum
likelihood estimator is not consistent. Indeed, when a sequence of $N$
identically quantum systems with Hilbert space $\mathcal{H}$ is prepared in the
same state $\rho_{\theta}$, the composite system is described by the $N$-tensor
product $\underbrace{\mathcal{H} \otimes\cdots \otimes
  \mathcal{H}}_{N\text{-times}}$ and its state is described by an $N$-fold
tensor product state $\underbrace{\rho_{\theta} \otimes \cdots\otimes
  \rho_{\theta}}_{N-\text{ times}}$. Then, the likelihood function produced by
the outcomes $\vec{x} = \left( x_1,...,x_N \right) \in \mathcal{X}^{N}$ from a
sequence of $N$ independent POVMs $M(g)$ is
\begin{equation}
  \label{eq:likeP}
  \begin{split}
  L(\theta \mid \vec{x}; g) &= \text{Tr}\left[ \left(  M_{g}(x_1) \otimes \cdots \otimes M_{g}(x_N) \right) \rho_{\theta} \otimes \cdots\otimes
    \rho_{\theta} \right] = \prod_{i=1}^{N} \text{Tr} \left[   M_{g}(x_i) \rho_{\theta} \right] = \prod_{i=1}^{N}p(x_i \mid \theta; g ) \\
  &= p(0 \mid \theta; g)^{m}p(1 \mid \theta; g)^{N-m}\,,
 \end{split} 
\end{equation}
where $m$ is the number of $0$'s in $\vec{x}$, and
\begin{equation}
  \label{eq:p_mi}
  p\left( x \mid \theta; g \right) = \begin{cases}
    \text{Tr}\left[ M_g(1)\rho_{\theta} \right]=\frac{1}{2} \left[ 1 + \sin\left( \theta - g \right) \sqrt{F_Q} \right] \,,\quad x=1\,,  \\
    \text{Tr}\left[ M_g(0)\rho_{\theta} \right]=\frac{1}{2} \left[ 1 - \sin\left( \theta - g \right) \sqrt{F_Q} \right]\,, \quad  x=0\,.
  \end{cases}
\end{equation}
By looking for the MLE of $\theta$, denoted $\hat{\theta}_{\text{MLE}}$, we obtain:
\begin{equation}
  \label{eq:mle_P}
  \hat{\theta}_{\text{MLE}} = \text{arg max}_{\theta \in \Theta} L(\theta \mid \vec{x}; g) = \text{arcsin }\left( \frac{1}{\sqrt{F_Q}}\left[ 1-\frac{2m}{N} \right]    \right) + g\,,
\end{equation}
which returns two values in $\Theta = \left[0, 2 \pi \right)$, indicating the
the likelihood is non-identifiable. Thus, the Fisher information alone is not
enough to characterize the error of this estimation strategy. The previous
discussion also can be found in \cite{Chapeau2016}.

Now we consider the case of POVMs with $k\geq 3$ outcomes. Any element
of a POVM $M = \left\{M(i) \right\}$ with outcomes in a set
$\mathcal{X} = \left\{1,2,..., k \right\}$ can be written as
$M(i) = f_0^{(i)}I + \vec{f}^{(i)} \cdot \vec{\sigma}$, where
$\sum_{i=1}^{k} f_0^{(i)} = 1$ and
$\sum_{i=1}^{k} \vec{f}^{(i)} = \vec{0}$. When a qubit is in a state
$\rho_{\theta}$, the probability of measuring the outcome $i$ is
\begin{equation}
  \label{eq:pi}
  p(i \mid \theta) = \text{Tr}\left[\rho_{\theta}M(i)\right] = f_0^{(i)}+ \vec{a}(\theta) \cdot \vec{f}^{(i)}.
\end{equation}

A necessary and sufficient condition for identifiability is that, for any
parameters $\theta_1, \theta_2 \in \Theta$, the set of equations $p(i \mid \theta_1)
= p(i \mid \theta_2)$, $i = 1,...,k$ admits only one solution in the parametric
space $\Theta$ \cite{Marica}.

The problem of identifiability could be avoided by considering POVMs
with several outcomes because it is easier to satisfy the requirements
for identifiability given above. However, a POVM that produces an
identifiable likelihood is not necessarily optimal. Outside the
optimal initial conditions, there is no measurement $M$ that does not
depend on $\theta$ and that saturates the QCRB
\cite{Barndorff-Nielsen2000}. In \cite{Chapeau2016} it is shown a
family of identifiable POVMs with more than $2$ outcomes, however,
except for the optimal initial condition, this family is not locally
optimal. The results in \cite{Barndorff-Nielsen2000, Chapeau2016} do
not forbid the existence of an identifiable and locally optimal POVM
with more than $2$ outcomes, but we could not find one. Note also that
for a small number of measurements, the locally optimal POVM  does not
necessarily saturate the QCRB. Nevertheless it is possible to built a
POVM that, for one measurement, produces an identifiable model that
minimizes the mean square error. We will review now how to build this
POVM by using the covariant approach \cite{Holevo2011}. Once we show
how this POVM is built, we proceed with our proposal to achieve the
QCRB for qubit phase estimation.

\subsection{Covariant phase estimation for qubits}
\label{lb:sub_covariant}
The following discussion is based on \cite{Holevo2005}, where $M_{*}$ is deduced
for any shift parameter.

As we have previously mentioned, phase estimation estimation is covariant with trivial stabilizer group since the group $G = \left[0, 2 \pi \right)$ equipped with addition modulo $2\pi$ acts over itself. Then, $\Theta$ is isomorphic to $G$. Moreover, $U=\left\{U_{\theta} \right\}_{\theta \in \left[0, 2 \pi \right)}$ is a $G$-unitary representation of $\mathcal{H}$. Using Eq. (\ref{eq:covchar}) the covariant POVM $M$  to estimate $\theta$ has the form
\begin{equation}
  \label{eq:povmcovq}
M(d \, \hat{\theta}) =  e^{-i \hat{\theta} \vec{n} \cdot \frac{\vec{\sigma}}{2}} P_o e^{i \hat{\theta} \vec{n} \cdot
  \frac{\vec{\sigma}}{2}} \frac{d \hat{\theta}}{2 \pi}\, ,
\end{equation}
where $P_o$ is a positive operator satisfying Eq. (\ref{eq:pc}).\\
The generator of the unitary representation is the operator $J:=\vec{n} \cdot \frac{\vec{\sigma}}{2}$, which has a spectral decomposition 
\begin{equation}
  \label{eq:sdec}
  \begin{split}
    &J = \frac{1}{2} \ket{\tfrac{1}{2}}\bra{\tfrac{1}{2}} - \frac{1}{2} \ket{-\tfrac{1}{2}} \bra{-\tfrac{1}{2}}\,,
    \end{split}
\end{equation}
where have used that $\ket{\pm \tfrac{1}{2}} \bra{\pm \tfrac{1}{2}} = \frac{1}{2}\left(I \pm \vec{n} \cdot \vec{\sigma} \right)$.
In the basis $\left\{\ket{\tfrac{1}{2}}, \ket{-\tfrac{1}{2}} \right\}$, one can express  $M(d \, \hat{\theta})$ as
\begin{equation}
  \label{eq:mbasis}
  M(d \, \hat{\theta}) = \frac{d \hat{\theta}}{2 \pi} \sum_{m,n \in \left\{-\frac{1}{2}, \frac{1}{2} \right\}}  e^{i \hat{\theta} \left(n-m\right) }p_{nm}\ket{m}\bra{n}, 
\end{equation}
where $p_{m n} =\bracket{m |P_o|n }$. Thus, any covariant POVM is characterized by the real numbers $ 0 \leq p_{n m} \leq 1$. Then, by Eq. (\ref{eq:br})
\begin{equation}
  \label{eq:brspin}
  \begin{split}
    &P(\hat{\theta} \in B | \theta ) = \int_{B} \frac{d \hat{\theta}}{2 \pi}\sum_{m,n \in \left\{-\frac{1}{2}, \frac{1}{2} \right\}} e^{i(n-m)(\hat{\theta}-\theta)}p_{mn}\bra{n} \rho \ket{m}. 
    \end{split}
\end{equation}
Note that the measure (\ref{eq:brspin}) is a $2\pi$-periodic probability distribution, so it is natural to consider the moments of the random variable $e^{i\phi}$ instead. The first moment for a circular distribution $p(\phi)$ is defined as
\begin{equation}
  \label{eq:circmom}
  \begin{split}
    E\left[e^{i \phi} \right] = \int_{0}^{2 \pi}  e^{i \phi}  p(\phi) d \, \phi \,,
  \end{split}
\end{equation}
so from here we can estimate the phase as  $\overline{\phi} = \text{Arg}\left( E\left[e^{i \phi} \right] \right)$.\\
To quantify the correct dispersion of the estimates we use the Holevo variance \cite{Holevo2011}
\begin{equation}
  \label{eq:HV}
  V^{H}_{M}(\hat{\theta}) = \mu^{-2}-1\, ,
\end{equation}
where $M \in \mathcal{M}\left(\Theta, U\right)$ and $\mu = \left \rvert E\left[e^{i\hat{\theta}} \right] \right \lvert$. If one has a biased estimator, then $\mu = E\left[ \cos(\hat{\theta}- \theta) \right]$ \cite{Berry2009}. A circular estimator $\hat{\theta}$ is unbiased if $e^{i \hat{\theta}} \propto E\left[e^{i\hat{\theta}}\right]$. For narrowly peaked and symmetric distributions around $\theta$, $V^{H}_{M}(\hat{\theta}) \sim \text{MSE}_{M}(\hat{\theta})$. As a result,  Holevo's variance is lower bounded by
\begin{equation*}
  \label{QCRBCov}
  V^{H}_{M}(\hat{\theta}) \geq \frac{1}{F_Q(\theta)}\,. 
\end{equation*}

As the quantum Fisher information reaches its largest values for the families of pure states, let us restrict our attention only to pure quantum probes. Moreover, when the probe is a pure state, one can find the covariant POVM that minimizes the Holevo variance using the spectrum of $J$ for any shift parameter \cite{Holevo2011, Holevo2005}. Here, we adapt the proof of \cite{Holevo2011} to the problem of phase estimation. 
\begin{theorem}
  \label{Theo3}
  Let $ \rho _{\theta} = U_{\theta}\rho U_{\theta}^{\dagger}$, with $\rho = \ket{\psi}\bra{\psi}$ and 
 \begin{equation}
    \label{eq:MstarProy}
    M_{*}(d \hat{\theta}) = \frac{d \hat{\theta}}{2 \pi} \sum_{m,n \in Spec(J)}  e^{i \hat{\theta} \left(n-m\right) }  \cfrac{P_{m} \rho P_{n}}{\sqrt{\text{Tr}[P_m \rho]  \text{Tr}[P_n\rho]}}  ,
  \end{equation}
  where $P_m = \ket{m}\bra{m}$ is the associated projector to the eigenvalue $m$ of $J$. Then, for any $M \in \mathcal{M}\left(\Theta, U \right)$,
  \begin{equation}
    V_M^{H}(\hat{\theta}) \geq V_{M*}^{H}(\hat{\theta})\,.
  \end{equation}  
\end{theorem}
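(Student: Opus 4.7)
The plan is to reduce the minimization of $V^H_M$ to a maximization of $\mu=|E_{\theta,M}[e^{i\hat\theta}]|$ and then bound this quantity using the structural constraints on a covariant POVM.

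First I would recall that for $M\in\mathcal{M}(\Theta,U)$ the Holevo variance is translationally invariant (the bias $\mu=|E[e^{i\hat\theta}]|$ does not depend on $\theta$ by Eq.~\eqref{eq:costinv} applied to $c=-\cos(\hat\theta-\theta)$), so that minimizing $V^H_M=\mu^{-2}-1$ is equivalent to maximizing $\mu$. I would then insert the basis expansion of a general covariant POVM from Eq.~\eqref{eq:mbasis} into $E[e^{i\hat\theta}]=\int e^{i\hat\theta}\operatorname{Tr}[\rho_\theta M(d\hat\theta)]$. Using $\langle n|\rho_\theta|m\rangle = e^{-i\theta(n-m)}\rho_{nm}$ (with $\rho_{nm}=\langle n|\psi\rangle\langle\psi|m\rangle$) and the angular integral
\begin{equation*}
\int_0^{2\pi}\frac{d\hat\theta}{2\pi}\,e^{i\hat\theta(1+n-m)}=\delta_{m-n,1},
\end{equation*}
the double sum collapses onto pairs $(m,n)\in\mathrm{Spec}(J)^2$ with $m-n=1$, i.e.\ a single contribution $(m,n)=(\tfrac12,-\tfrac12)$ in the qubit case.

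Next I would translate the conditions on $P_0$ into conditions on the matrix elements $p_{mn}=\langle m|P_0|n\rangle$. The normalization Eq.~\eqref{eq:pc} yields $p_{mm}=1$ for every $m\in\mathrm{Spec}(J)$ (the off-diagonal modes are killed by the same angular integral), while positivity of $P_0$ forces every principal $2\times 2$ minor to be non-negative, hence $|p_{mn}|\le\sqrt{p_{mm}p_{nn}}=1$. Combining these two facts with the collapsed expression for $E[e^{i\hat\theta}]$ gives
\begin{equation*}
\mu \;\le\; \sum_{m-n=1}|p_{mn}|\,|\rho_{nm}| \;\le\; \sum_{m-n=1}|\rho_{nm}|,
\end{equation*}
by the triangle inequality followed by the bound on $|p_{mn}|$.

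Finally, I would verify that the upper bound is attained by $M_*$. Writing $P_m\rho P_n=\langle m|\psi\rangle\overline{\langle n|\psi\rangle}\,|m\rangle\langle n|$ for the pure probe, the coefficient of $|m\rangle\langle n|$ in $M_*$ is precisely $p_{mn}^{*}=\langle m|\psi\rangle\overline{\langle n|\psi\rangle}/(|\langle m|\psi\rangle||\langle n|\psi\rangle|)$, a unit-modulus phase chosen so that $p_{mn}^{*}\rho_{nm}=|\rho_{nm}|$ is real and non-negative. Thus both the triangle inequality and the modulus bound are saturated, and $\mu_{M_*}=\sum_{m-n=1}|\rho_{nm}|$ is optimal, proving $V_M^H(\hat\theta)\ge V_{M_*}^H(\hat\theta)$. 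The main obstacle I anticipate is the verification that the operator defined by the phases $p_{mn}^{*}$ is indeed a bona fide positive $P_0$ (so that $M_*$ really belongs to $\mathcal{M}(\Theta,U)$); in the qubit case this is immediate since $P_0$ is $2\times 2$ with unit diagonal and $|p_{mn}^{*}|=1$ gives a non-negative determinant, but one should state the argument carefully to mirror Holevo's general proof.
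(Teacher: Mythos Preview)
Your proposal is correct and follows essentially the same route as the paper: expand a generic covariant POVM in the $J$-eigenbasis, use the angular integral to reduce $E_{\theta,M}[e^{i\hat\theta}]$ to a sum over $m-n=1$, bound it via $|p_{mn}|\le 1$, and observe that equality is attained precisely by the phase choice $p_{mn}^{*}$ defining $M_{*}$. If anything you are more careful than the paper, which writes the key inequalities directly between complex quantities and simply asserts $|p_{mn}|\le 1$, whereas you correctly pass to moduli via the triangle inequality, derive $|p_{mn}|\le 1$ from positivity of $P_0$ together with $p_{mm}=1$, and flag the (easy, in the qubit case) verification that $M_{*}$ is a bona fide covariant POVM.
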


\begin{proof}
In Dirac notation, $\rho = \ket{\psi}\bra{\psi}$ so that Eq. (\ref{eq:MstarProy}) takes the following form
\begin{equation}
  \label{eq:Mstar}
    M_{*}(d \hat{\theta}) = \frac{d \hat{\theta}}{2 \pi} \sum_{m,n \in Spec(J)}  e^{i \hat{\theta} \left(n-m\right) }   p_{mn}^{*}     \ket{m}\bra{n},
\end{equation}
with $p_{mn}^{*} =\frac{\bracket{m|\psi}}{|\bracket{m |\psi}|} \frac{\bracket{\psi|n}}{|\bracket{\psi |n}|}$. On other hand, from Eq. (\ref{eq:mbasis}), we have   
  \begin{equation}    
      E_{M, \theta}[ e^{i k \hat{\theta}}] =  \sum_{\substack{m,n :\\ |m-n|=k}}  p_{m n} \bracket{\psi|m}  \bracket{n|\psi}  e^{-i(n-m)\theta}.
  \end{equation}
  Since $|p_{mn}| \leq 1$,
  \begin{equation}
    \begin{split}
      E_{M, \theta}[ e^{i k \hat{\theta}}] &\leq \sum_{\substack{m,n :\\ |m-n|=k}}  \bracket{\psi|m}  \bracket{n|\psi}  e^{-i(n-m)\theta}\leq \sum_{\substack{m,n :\\ |m-n|=k}}  |\bracket{\psi|m}|   |\bracket{n|\psi}|  e^{-i(n-m)\theta}.
      \end{split}
    \end{equation}
    where the equality is achieved if and only if $p_{m n}=p_{m n}^{*}$. Setting $k=1$ yields the assertion $\square$.
     \end{proof}
 Besides, if $\bracket{m | \psi}$ is a constant for all $m \in
 \text{Spec}(J)$, the  measurement $M_{*}$ has information about
 $\theta$ equal to the quantum Fisher  information \cite{Holevo2005}.
 For qubits, the previous condition is equivalent  to have an initial
 condition $\vec{a} \cdot \vec{n}= 0$.

 In this context, we now prove that the measurement $M_{*}$ maximizes
 the Fisher information over the set of covariant POVMs.

\begin{theorem} 
Let $\rho _{\theta} = e^{-i \theta \vec{n} \cdot \frac{\vec{\sigma}}{2}}\rho e^{-i \theta \vec{n} \cdot  \frac{\vec{\sigma}}{2}}$, with 
\begin{equation*}
\begin{split}
\rho = \ket{\psi}\bra{\psi} = \frac{1}{2}\left[ I + \vec{a} \cdot \hat{n} \right] \,,\quad\quad \norm{ \vec{a}}
  =1. \end{split}
\end{equation*}
 Then, for any $M \in \mathcal{M}\left(\Theta, U \right)$,
  \begin{equation}
    F(\theta; M) \leq F(\theta; M_{*})\,,
  \end{equation}  
where $M_*(d \hat{\theta})$ is defined according to Eq.  (\ref{eq:MstarProy}).
\end{theorem}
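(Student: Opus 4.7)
The plan is to reduce the optimization to a one real-parameter problem that can be solved in closed form.

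First I would parametrize every $M \in \mathcal{M}(\Theta, U)$ by the matrix elements $p_{mn} = \bra{m} P_0 \ket{n}$ appearing in Eq.~(\ref{eq:mbasis}), with $m,n \in \{-\tfrac{1}{2}, +\tfrac{1}{2}\}$. Expanding the normalization condition Eq.~(\ref{eq:pc}) in the $J$-eigenbasis, the angular integral enforces $\sum_m p_{mm} \ket{m}\bra{m} = I$, hence $p_{mm}=1$. Hermiticity of $P_0$ gives $p_{-+} = p_{+-}^{\ast}$, and positive semidefiniteness of the resulting $2\times 2$ matrix forces $|p_{+-}| \leq 1$. Thus the whole covariant class is labelled by a single complex number $p_{+-}$ in the closed unit disk, and inspection of Eq.~(\ref{eq:MstarProy}) shows that $M_{\ast}$ sits at the boundary $|p_{+-}| = 1$.

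Next, writing the pure state as $\ket{\psi} = c_{+} \ket{+\tfrac{1}{2}} + c_{-} \ket{-\tfrac{1}{2}}$ and substituting into Born's rule (\ref{eq:brspin}), I would show that the four terms in the sum collapse to a single Fourier mode,
\begin{equation*}
p(\hat\theta \mid \theta) = \frac{1}{2\pi}\bigl[\, 1 + A\cos(\hat\theta - \theta + \phi)\, \bigr],
\end{equation*}
with $A = 2\,|p_{+-}|\,|c_{+}|\,|c_{-}|$ and $\phi = \arg\bigl(p_{+-}\, c_{-} c_{+}^{\ast}\bigr)$. In particular $A \leq 2|c_{+}||c_{-}| =: A_{\ast}$, with equality precisely when $|p_{+-}|=1$, i.e.\ when $M = M_{\ast}$.

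Finally, I would compute the Fisher information $F(\theta; M) = \int_{0}^{2\pi} (\partial_{\theta} p)^{2} / p \, d\hat\theta$, which by translation invariance is independent of both $\theta$ and $\phi$. Using the classical identities $\int_{0}^{2\pi} du / (1 + A\cos u) = 2\pi / \sqrt{1-A^{2}}$ and the derived formula $\int_{0}^{2\pi} \sin^{2} u / (1+A\cos u)\, du = 2\pi (1 - \sqrt{1-A^{2}}) / A^{2}$, a short calculation gives the closed form $F(\theta; M) = 1 - \sqrt{1-A^{2}}$. This is strictly increasing in $A \in [0,1)$; combined with the bound $A \leq A_{\ast}$, the desired inequality $F(\theta; M) \leq F(\theta; M_{\ast})$ follows at once.

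The main obstacle I anticipate is the bookkeeping in Step~1: extracting the constraints $p_{mm}=1$ and $|p_{+-}| \leq 1$ cleanly from Eq.~(\ref{eq:pc}) and the positivity of $P_{0}$, and then checking that the POVM $M_{\ast}$ defined through the projectors in Eq.~(\ref{eq:MstarProy}) really does realize the extremal case $|p_{+-}|=1$ for any pure $\ket{\psi}$ (not only under the optimal condition $\vec{a}\cdot\vec{n}=0$). Once the problem is reduced to the single scalar $A$ on a bounded interval, the remaining argument is purely a one-dimensional monotonicity computation.
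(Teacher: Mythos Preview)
Your proposal is correct and follows essentially the same route as the paper: both reduce the covariant class to the single complex parameter $p_{+-}$ (equivalently the modulus $|C|$ in the paper, your $A=2|C|$), write the outcome density as $\tfrac{1}{2\pi}[1+A\cos(\hat\theta-\theta+\phi)]$, and conclude by monotonicity of the resulting Fisher integral in $A$ together with $|p_{+-}|\le 1$. Your version is a bit more explicit---you derive the constraints $p_{mm}=1$, $|p_{+-}|\le 1$ from Eq.~(\ref{eq:pc}) and positivity, and you evaluate the integral to the closed form $1-\sqrt{1-A^{2}}$ rather than merely asserting monotonicity---but the underlying argument is the same.
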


\begin{proof}
  Writing explicitly Eq. (\ref{eq:brspin}) we obtain   
  \begin{equation}
    \label{eq:probexplicit}
    p(d \hat{\theta}) = \frac{d \hat{\theta}}{2 \pi} \left[ 1 + 2 \text{Re} \left( e^{-(\hat{\theta}-\theta)} \cdot C \right) \right],
   \end{equation} 
where $C = p_{\footnotesize{\frac{1}{2} -\frac{1}{2}}} \bracket{-\tfrac{1}{2}|\psi}\bracket{\psi|\tfrac{1}{2}} $ is a complex number with module $|C|$ and phase $\varphi$. Thus,
  \begin{equation}
    \label{eq:probexplicit2}
    p(d \hat{\theta}) = \frac{d \hat{\theta}}{2 \pi} \left[ 1 + 2  |C| \cos\left( (\hat{\theta}- \theta) + \varphi  \right) \right].
   \end{equation}
   Therefore the Fisher information,
   \begin{equation}
     \label{FisherCova}
     \begin{split}
       F(M) &= \frac{1}{2 \pi} \int_{\Theta}  d\hat{\theta}  \frac{4|C|^2 \sin^2\left( (\hat{\theta}- \theta) + \varphi  \right) }{1+2|C|\cos\left(  (\hat{\theta} - \theta) + \varphi    \right)}
       = \frac{4|C|^2}{2\pi} \int_{\Theta} dx \frac{\sin^2\left( x \right)}{1+2|C|\cos\left( x \right)}.
       \end{split}
     \end{equation} 
The value of the expression given by Eq. (\ref{FisherCova}) increases monotonically as $|C|$ increases. As   $ \lvert \bracket{\pm \tfrac{1}{2} | \psi} \rvert =  \sqrt{ \tfrac{1}{2} \left( 1 \pm (\vec{a} \cdot \vec{n}) \right) }$, then   $|C| = |\leq \frac{1}{2} \sqrt{F_Q}$, and the maximum value for $F(M)$ is attained by $M_{*}$ $\square$.
\end{proof}
To find an explicit epxression of $p(d \hat{\theta}|\theta )$, we first notice that
  \begin{equation}
    \label{eq:projects}
    \begin{split}
\ket{\pm \tfrac{1}{2}} \bracket{\pm \tfrac{1}{2}|\psi}\bracket{\psi|\mp \tfrac{1}{2}} \bra{\mp \tfrac{1}{2}} = \frac{1}{4}\left[ 2(\vec{a} \cdot \vec{\sigma}) + i(\vec{a} \times \vec{n}) \cdot \vec{\sigma} - (\vec{n} \cdot\vec{\sigma})(\vec{n} \cdot \vec{a}) \right]\, ,      
      \end{split}
    \end{equation}
so that Eq. (\ref{eq:MstarProy}) can be rewritten as follows
  \begin{equation}
    \label{eq:mstarexp}
    M_{*}(d \hat{\theta}) = \frac{d \hat{\theta}}{2 \pi} \left[ I +  \frac{1}{\sqrt{F_Q}}    \left[   \left( \vec{a} \cdot \vec{\sigma} - (\vec{n} \cdot \vec{\sigma})(\vec{n} \cdot \vec{a}) \right) \cos(\hat{\theta}) -  \sin(\hat{\theta}) (\vec{a} \times \hat{n}) \cdot \vec{\sigma} \right]    \right]\,.
  \end{equation}
Then, the measurement $M_*$ yields the probability density function
  \begin{equation}
    \label{eq:pmstar}
    p(d \hat{\theta} \mid \theta ) = \frac{d \hat{\theta}}{2 \pi} \left[1 + \sqrt{F_Q}\cos(\hat{\theta}- \theta) \right]\,.
  \end{equation}
  As $M_{\star}$ is a POVM with an infinity number of outcomes, it is reasonable
  to get an identifiable statistical model independent of $\theta$. To
  illustrate this fact, we perform a numerical simulation, generating random
  numbers with distribution (\ref{eq:pmstar}). A particular likelihood function
  produced by (\ref{eq:pmstar}) is shown in Fig. \ref{Fig:iden}.
\begin{figure}[h!]
  \centering
    \includegraphics[scale = 0.6]{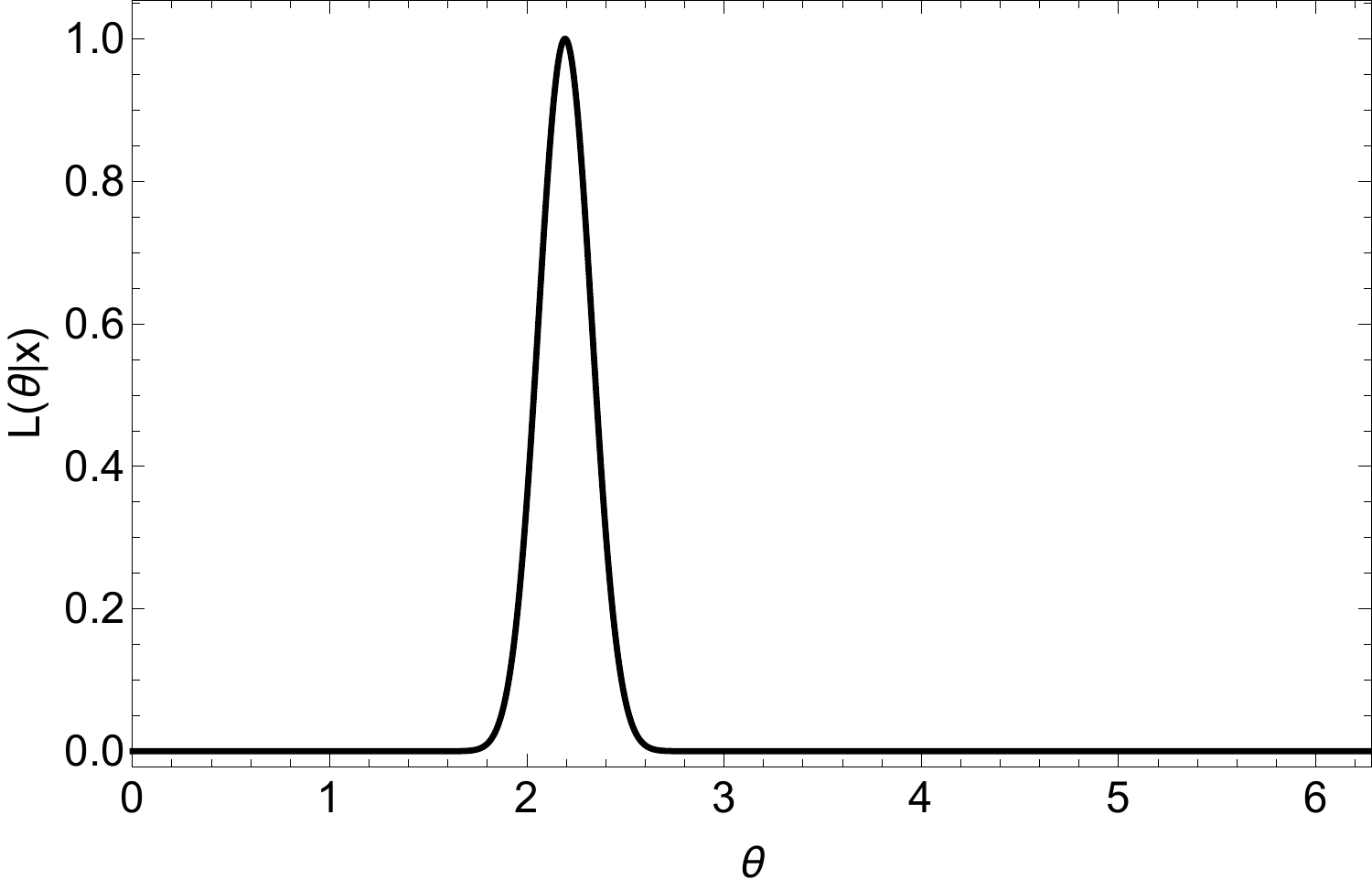}
  \caption{Plot of  the likelihood function corresponding  to the probability density Eq. \eqref{eq:pmstar} for the measure $M_{\star}$ and  $N=64$ probes. We have considered  the value of the actual parameter to be $\theta=2$ and used initial conditions $\vec{a}\cdot\vec{n}=0$. For clarity we have rescaled the figure so that the maximum equals to one.}
   \label{Fig:iden}
\end{figure} 

Moreover, according to Eq. (\ref{eq:pmstar}), the Fisher information given $M_{*}$ reads:
  \begin{equation}
    \label{eq:FIMstar}
       F(\theta; M_{*}) = \frac{1}{2 \pi} \int_{0}^{2 \pi} \frac{F_Q  \sin^2(\hat{\theta} - \theta)}{1+\sqrt{F_Q} \cos(\hat{\theta}- \theta)} d\hat{\theta}  = 1- \sqrt{1- F_Q} \,.
     \end{equation}
When $\vec{a} \cdot \vec{n}=0$, $F_Q=1$ and therefore $F(\theta; M_{*}) =  F^{\text{max}}_{Q}=1$. This is not surprising since, under this optimal initial condition, $\lvert \bracket{\tfrac{1}{2}|\psi} \rvert$ = $\lvert \bracket{-\tfrac{1}{2} | \psi} \rvert$. Consequently, assuming this optimal setup, if the maximum likelihood is used for $N$  independent copies of the system, one can attain the QCRB  asymptotically for large $N$. Nevertheless, under the optimal  initial condition, there are other covariant POVMs that can attain the QCRB. To see this, let us characterize all covariant measurements for quantum  phase estimation of qubits.

Every positive bounded operator $P_o$ can be expressed in the Bloch representation, that is, $P_o = d_0 + \vec{d} \cdot \vec{\sigma}$, where  $(d_0, \vec{d})$ are real. When evaluating  Eq. (\ref{eq:pc}), it turns out that $P_o$ forms a valid covariant POVM if $d_0=1$ and $\vec{d}$ is orthogonal to $\vec{n}$. Therefore any covariant POVM has the form
\begin{equation}
  \label{eq:covPOVM}
  \begin{split}
  M(d \hat{\theta}) &= \frac{d\hat{\theta}}{2 \pi} \left[ I + \vec{d} \cdot \vec{\sigma} \cos(\hat{\theta}) - \sin(\hat{\theta}) (\vec{d} \times \hat{n}) \cdot \vec{\sigma} \right]\,.
  \end{split}
\end{equation} 
When a qubit is in the state $\rho_{\theta}$, the outcomes for the POVM $M(d\hat{\theta})$ follows the  probability density
\begin{equation}
  \label{eq:probcov}
  \begin{split}
  p(d \hat{\theta} | \theta ) &= \frac{d \hat{\theta}}{2\pi} \left[1+ \left(\vec{a} \cdot \vec{d} \right)\cos(\hat{\theta}- \theta)  - \vec{a} \cdot (\vec{d} \times \vec{n}) \sin(\hat{\theta}- \theta) \right]\,.
  \end{split}
\end{equation}
From this result, one can calculate the Fisher information associated with the covariant POVM $M$. In particular, the case $\vec{d} = \pm(\vec{n} \times \vec{a})$ is of interest, since the classical Fisher information reads
\begin{equation}
  \label{eq:ficov}
   F\left(  \theta; \left. M  \right\rvert_{\vec{d} = \pm(\vec{n} \times
\vec{a})}  \right) = \frac{1}{2 \pi} \int_{0}^{2 \pi} \frac{F_Q^2  \cos^2(\hat{\theta} - \theta)}{1 \pm F_Q \sin(\hat{\theta}- \theta)} d\hat{\theta}\,.
\end{equation}
As far as we are aware of, equations (\ref{eq:covPOVM})-(\ref{eq:ficov}) 
are new results. 
Note that, in general, $F(\theta, M_{*}) \geq F\left( \theta; \left. M
  \right\rvert_{\vec{d} = \pm(\vec{n} \times \vec{a})} \right)$, but
for the particular case  $\vec{a} \cdot \vec{n} = 0$, the POVM $\left.
  M \right\rvert_{\vec{d} =\pm(\vec{n} \times \vec{a})} $ is able to
reach the maximum of the Fisher information, $F^{\text{max}}_{Q} = 1$.
Hence, the POVM given by Eq. (\ref{eq:covPOVM}) is equivalent to the
measurement $M_{*}$. \\

Let us briefly summarise the results we have found thus far. Firstly,
we have shown that estimation strategies based on locally optimal POVM
with two outcomes cannot achieve the QCRB since the corresponding
likelihood is non-identifiable. POVMs with more than two outcomes may
be identifiable but we could not find the POVMs that are locally
optimal (globally optimal POVMs do not exist
\cite{Barndorff-Nielsen2000}). Secondly, we have seen that for
continuous and periodic outcomes the covariant POVM solves the
identifiability problem and minimizes the MSE for one measurement,
however, it can only achieve the QCRB (in the limit of a large number
of measurements) with optimal initial conditions
$\vec{a}\cdot\vec{n}=0$. Hence, to the best of our knowledge, it does
not exist a set of
independent and identical measurements that achieve the QCRB.  \\

It turns out that, under the optimal set of initial conditions, the
covariant POVM $M_{*}$ has been previously investigated and it is
usually called \textit{canonical phase measurement} \cite{Holevo2005,
  Berry2009, Martin2019} which, according to Eq. (\ref{eq:Mstar}),
takes the following form:
\begin{equation}
  \label{eq:Mstar_can}
  M_{*}^{can}(d \hat{\theta}) := \frac{d \hat{\theta}}{2 \pi} \sum_{m,n \in Spec(J)}  e^{i \hat{\theta} \left(n-m\right) } \ket{m}\bra{n}\,.
  \end{equation}
  Although the optimal covariant POVM is hard to realize experimentally, it can
  be well approximated by POVMs with large number of elements or adaptive
  measurements \cite{Chapeau2016, Peng, Martin2019}. In particular, the
  canonical phase measurement can be implemented using adaptive measurements
  with quantum feedback \cite{Martin2019}, and the POVM $M_{*}$ can be well
  approximated using a POVM with $k \geq 8$ that discretizes Eq.
  (\ref{eq:pmstar}).

\subsection{Entangled measurement}
\label{lb:sb_entangled}
It is actually possible to achieve the QCRB for a probe not orthogonal to the rotation axis if we relax the condition of identical and independent set of measures. Let us indeed see it is possible to attain the QCRB for  entangled measurements for any initial condition. Following \cite{Holevo2005}, we estimate $\theta \in [0, 2 \pi)$ for the family of states $ \rho^{\otimes N}_{\theta}$ in the  Hilbert space $\mathcal{H}^{\otimes N}$  with $\rho_{\theta} =U_{\theta}\rho U_{\theta}^{\dagger}$ and $\rho$ being a probe pure qubit. Note that $\rho^{\otimes N}_{\theta}$ can be written as $\rho^{\otimes N}_{\theta} = e^{-i \theta J^{(N)}} \rho^{\otimes N} e^{i \theta J^{(N)}}$, where
\begin{equation}
  \label{eq:sn}
  J^{(N)} = \left(\hat{n} \cdot \frac{\vec{\sigma}}{2} \right) \otimes \cdots \otimes I + \cdots + I \otimes \cdots \otimes \left(\hat{n} \cdot \frac{\vec{\sigma}}{2} \right)\,,
\end{equation}
that is, the family of states $\rho^{\otimes N}_{\theta}$ in $\mathcal{H}^{\otimes N}$ is covariant with respect to the unitary representation $e^{-i\theta J^{(N)}} $. Therefore, as in the case of one probe, the POVM that minimizes the Holevo variance is expressed in terms of the spectrum of $J^{(N)}$.  Therefore, the measurement $M_{*}^{(N)}$ that minimizes the Holevo variance has the expression
\begin{equation}
  \label{eq:EntPOVM}
  \begin{split}
  &M_{*}^{(N)}(d \hat{\theta}) = \frac{d \hat{\theta}}{2 \pi}
  \left[ \sum_{\substack{\lambda, \lambda' \in\\ Spec\left(J^{(N)}\right) }} e^{i \hat{\theta}(\lambda'-\lambda)} \frac{P_{\lambda} \rho^{\otimes N} P_{\lambda'}}{
 \sqrt{  Tr\left[  P_{\lambda} \rho^{\otimes N} \right]    Tr\left[ P_{\lambda'} \rho^{\otimes N}\right]       }}  \right],
  \end{split}
\end{equation}  
where, $P_\lambda$ is the projection operator of $J^{(N)}$ associated to the eigenvalue $\lambda$.\\
The difference here, in contrast to the single measurement case, is that here the spectrum of $J^{(N)}$ is degenerate. Thus
\begin{equation*}
  P_{*}= \int_{\Theta} M(d \hat{\theta}) \neq I,
\end{equation*}  
and therefore $M_{*}^{(n)}$ is not a resolution of identity. However, this is not a real problem because it can be extended to a POVM by adding $I-P_{*}$ in the orthogonal complement to the subspace $\mathcal{H}_{*}:=P_{*}(\mathcal{H}^{\otimes N})$. So, the optimal POVM is
\begin{equation*}
M_{*}^{(N)}:= P_{*} \oplus I-P_{*}.
\end{equation*}  
In particular, if $N=1$, $M_{*}^{(N)} = M_{*}$.\\
As stated in \cite{Holevo2005}, $M_{*}^{(N)}$ can achieve the QCRB for any probe state in the asymptotic limit. Besides, the scaling of the error is proportional to the square in the number of probes (Heisenberg scaling).  Unfortunately, $M_{*}^{(N)}$ is an entangled measurement. Hence, if one wants to construct an entangled POVM, it is necessary to have $N$ copies of the probe and perform quantum operations over all of them,  making the implementation of this type of POVMs an experimental challenge.

\subsection{Adaptive state quantum estimation (AQSE)}
\label{lb:sub_aqse}
As we mentioned in the introduction, unlike its classical counterpart, quantum
parameter estimation yields estimators which depend on the parameter one wants
to estimate. A way to tackle this problem is using an adaptive quantum
estimations scheme \cite{Fujiwara2011, Nagaoka,Yamagata2013AsymptoticQS}. It
works as follows: suppose one has a set of optimal POVMs $\left\{ M_g
\right\}_{g \in \Theta}$. One begins with an arbitrary initial guess $g_0$. Then
one applies the optimal measurement at $g_0$, $M_{g_0}$. Assuming the data $x_1$
is observed, one applies the MLE to the likelihood function $L_1(\theta \mid
x_1; g_{0}) = p(x_1 \mid \theta; g_{0})$ to obtain an estimate
$\hat{\theta}_1(x_1) =g_1$. This process is then repeated iterative. For $n \geq
2$, one applies the POVM $M_{g_{n-1}}$, with $g_{n-1}=
\hat{\theta}_{n-1}(x_1,...,x_{n-1}))$, obtaining the outcome $x_n$. Here
$\hat{\theta}_{n-1}$ is the estimation from the previous step using the outcomes
$x_1,...,x_{n-1}$. The likelihood function for the $n$th step is
\begin{equation}
  \label{eq:like_AQSE}
  L_{n}(\theta \mid x_1,...,x_{n-1}; g_{n-1}) = \prod_{i=1}^{n} p(x_i \mid \theta; g_{i-1})\,,
\end{equation}
where $x_i$ is the data observed at step $i$. Applying the MLE one
obtains the $n$th guess $g_n = \hat{\theta}_{n}\left( x_1,...,x_n\right)$. Even though this method relaxes the condition of identical measurements, this  does not guarantee that the resulting classical statistical model is regular thus impacting its performance \cite{Dee}.\\
To illustrate this let us revisit the problem of non-identifiability for the POVM $M_g$ previously discussed. To fix ideas we take $\theta=2$ and consider $N=64$ measurements. In  panel A of Fig~\ref{Fig:NI} we show the result of the likelihood function for $g=1.5$ using Eq. (\ref{eq:p_mi}). The likelihood is indeed non-identifiable and presents two global maxima in the interval  $[0, 2\pi)$. Panel B in the same figure shows the results of using AQSE starting with the initial value $g_0=1.5$. In this case we can see that there is only one global maxima but at the incorrect value of $\theta$. This is the result of having an original non-identifiable model, whose wrong maximum has been enhanced due the particular  random trajectory in the parameter space generated by AQSE. Thus, even though AQSE is able to lift a possible degeneracy of the global maxima there is no a priori way to control on how to enhance the correct maximum.\\
\begin{figure}[h!]
\centering
\includegraphics[scale = 0.47]{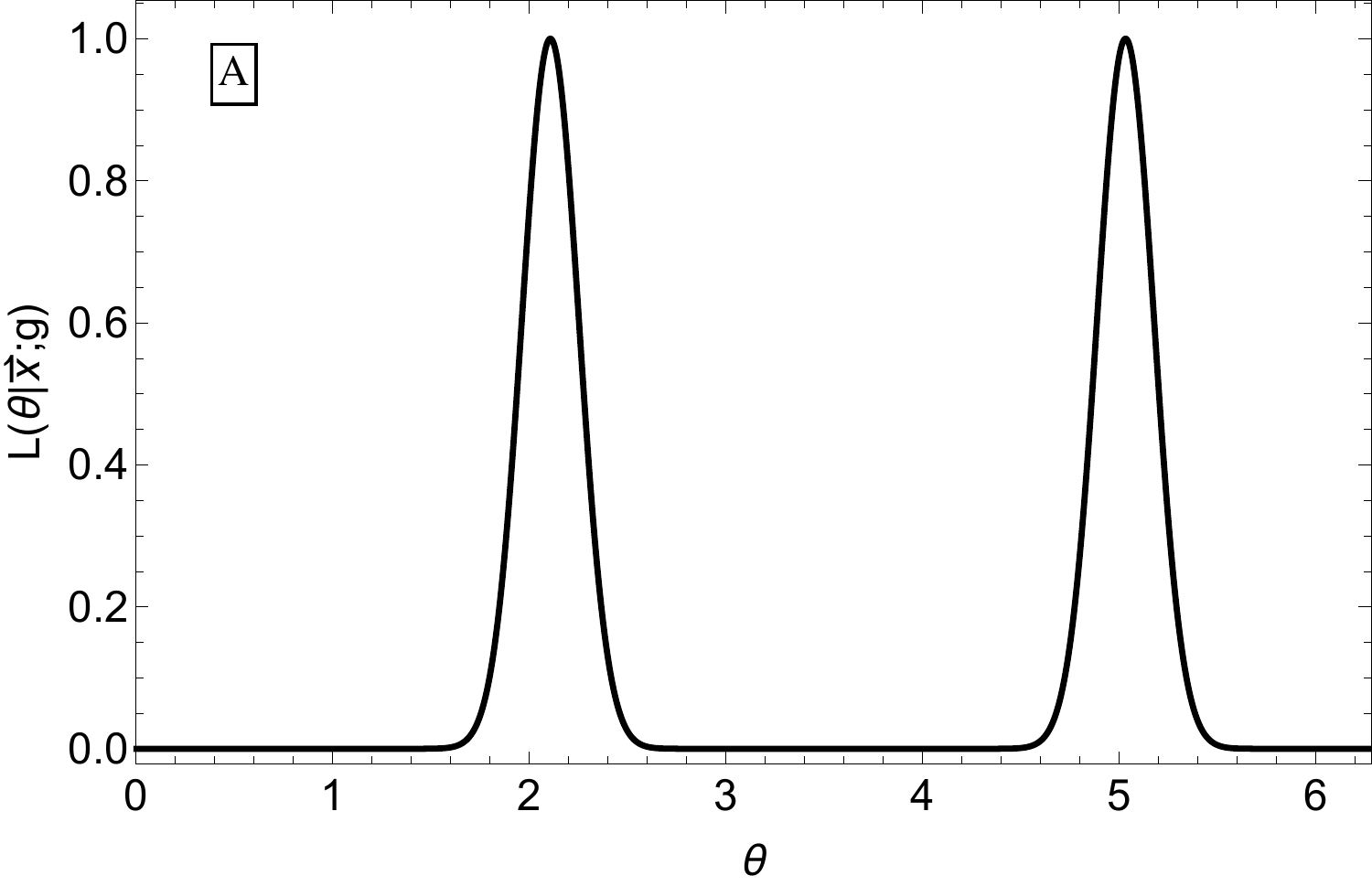}\quad\includegraphics[scale = 0.47]{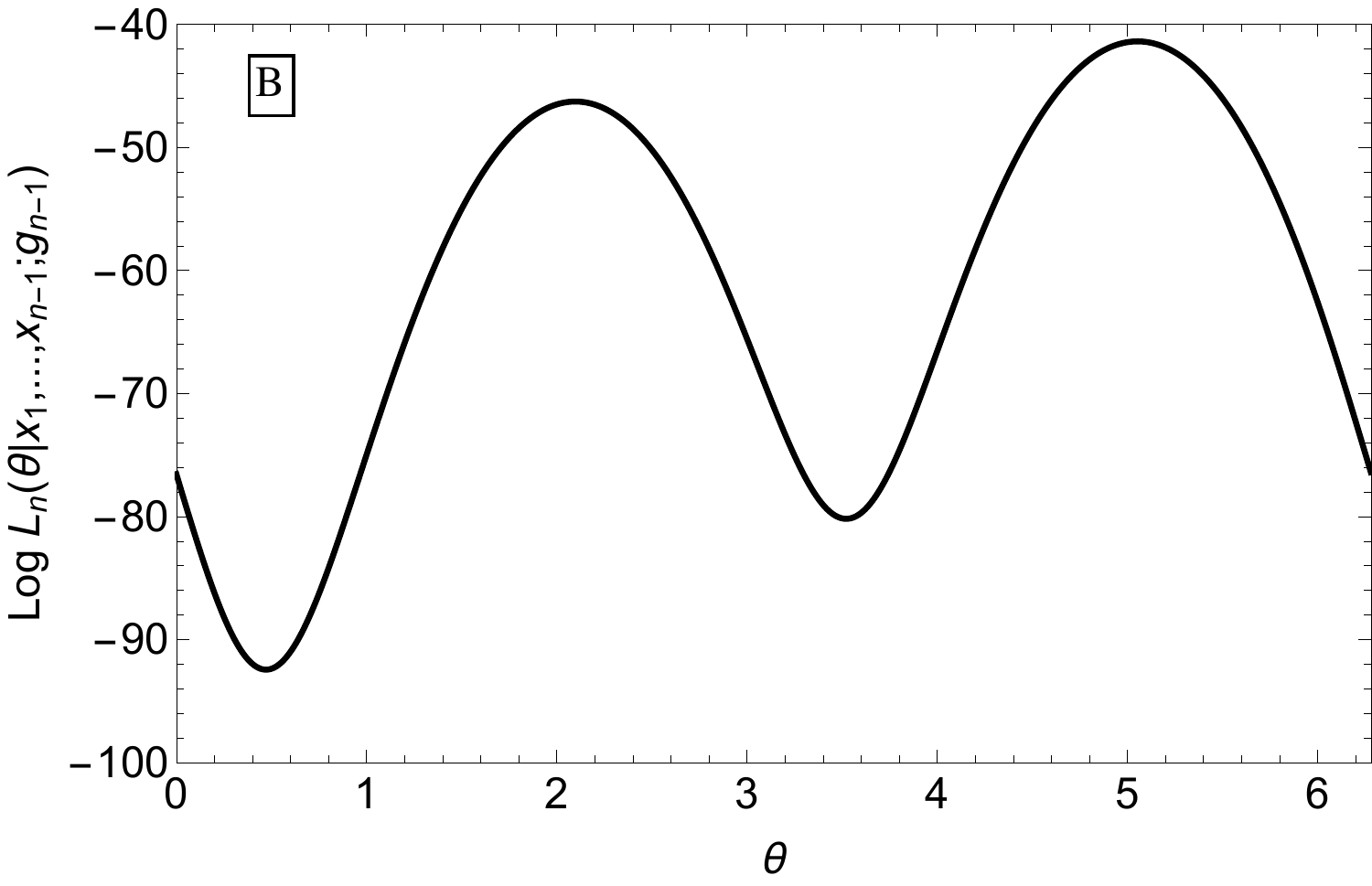}
\caption{Plot of the likelihood functions for the  two-outcome POVM $M_g$ given by Eq. \eqref{eq:projqs} and considering  $\theta =2$ and $N=64$ probes. Panel (A) shows the likelihood function given by formula \eqref{eq:likeP}. Here we have considered $g=1.5$ and the number of ones was 32  out of the possible $N=64$. For clarity the figure has been rescaled so that the maxima equal to unity. Panel (B) shows the log-likelihood function for the AQSE scheme given by Eq. \eqref{eq:like_AQSE} using $64$ adaptive steps. The existence of trajectories in the parameter space enhancing the incorrect maximum explains why the AQSE does not reach the QCRB. In both cases, the initial condition has been taken to be $\vec{a} \cdot \vec{n} = 0.5$.}
  \label{Fig:NI}
\end{figure}
With a modest amount of foresight, it seems  fairly natural to expect
that if we knew in which interval the parameter lies and  restricted the likelihood over that interval -so as to ensure that the restricted likelihood is identifiable- then the adaptive scheme will converge to the QCRB fast. To analyse this we have compared the behaviour of AQSE in two different scenarios: unrestricted and restricted likelihoods. In both cases we have taken $\theta=\pi$. For the unrestricted scheme the parameter space is  $\Theta = [0, 2\pi)$, while for the restricted one we take $\theta \in \Theta = [\frac{\pi}{2}, \frac{3 \pi}{2})$, so that the likelihood is identifiable. In Fig.~\ref{Fig:S1} we show the result of these two cases for two extremal initial conditions $\vec{a} \cdot \vec{n} = 0.5$ (panel A) and  $\vec{a} \cdot \vec{n} = 0$ (panel B), corresponding to the lowest and largest value of  $F_Q$, respectively. In both cases the numerical method was done using a bootstrap simulations with $10000$ repetitions. \\
From these results we can observe that in the unrestricted case the lack of identifiability  directly affects the scaling of the Holevo variance as a function of the number of adaptive steps. Furthermore, numerical results show that AQSE does not saturate the QCRB. On the other hand, in the restricted case, we find that the MLE saturates the QCRB around $128$ measurements for both initial conditions. Thus we conclude that a possible deficient performance of  AQSE is due to the non-identifiability for $\theta$ in the likelihood functions.\\
\begin{figure}[h!]
    \centering
   \includegraphics[scale=0.47]{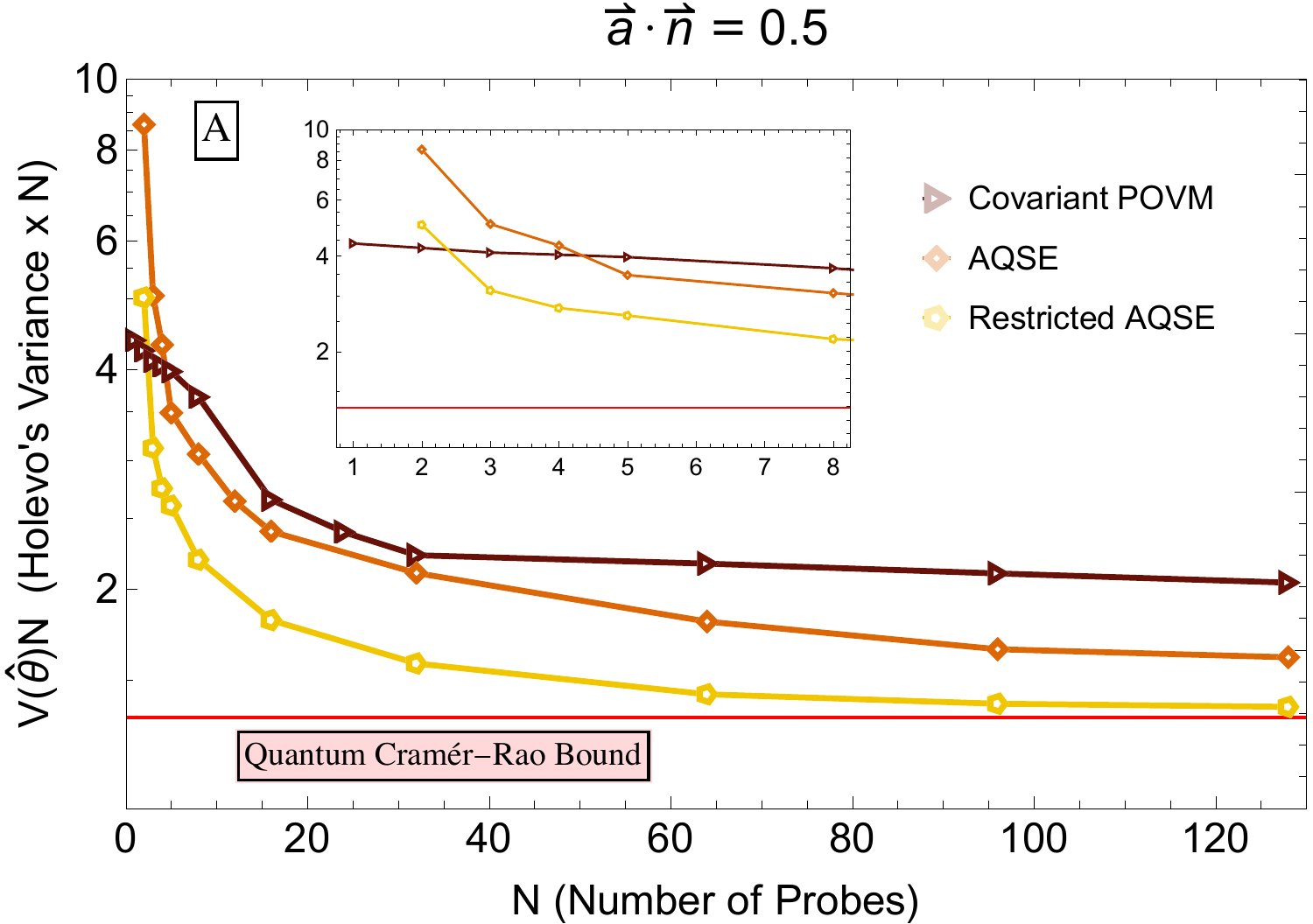}\quad  \includegraphics[scale=0.47]{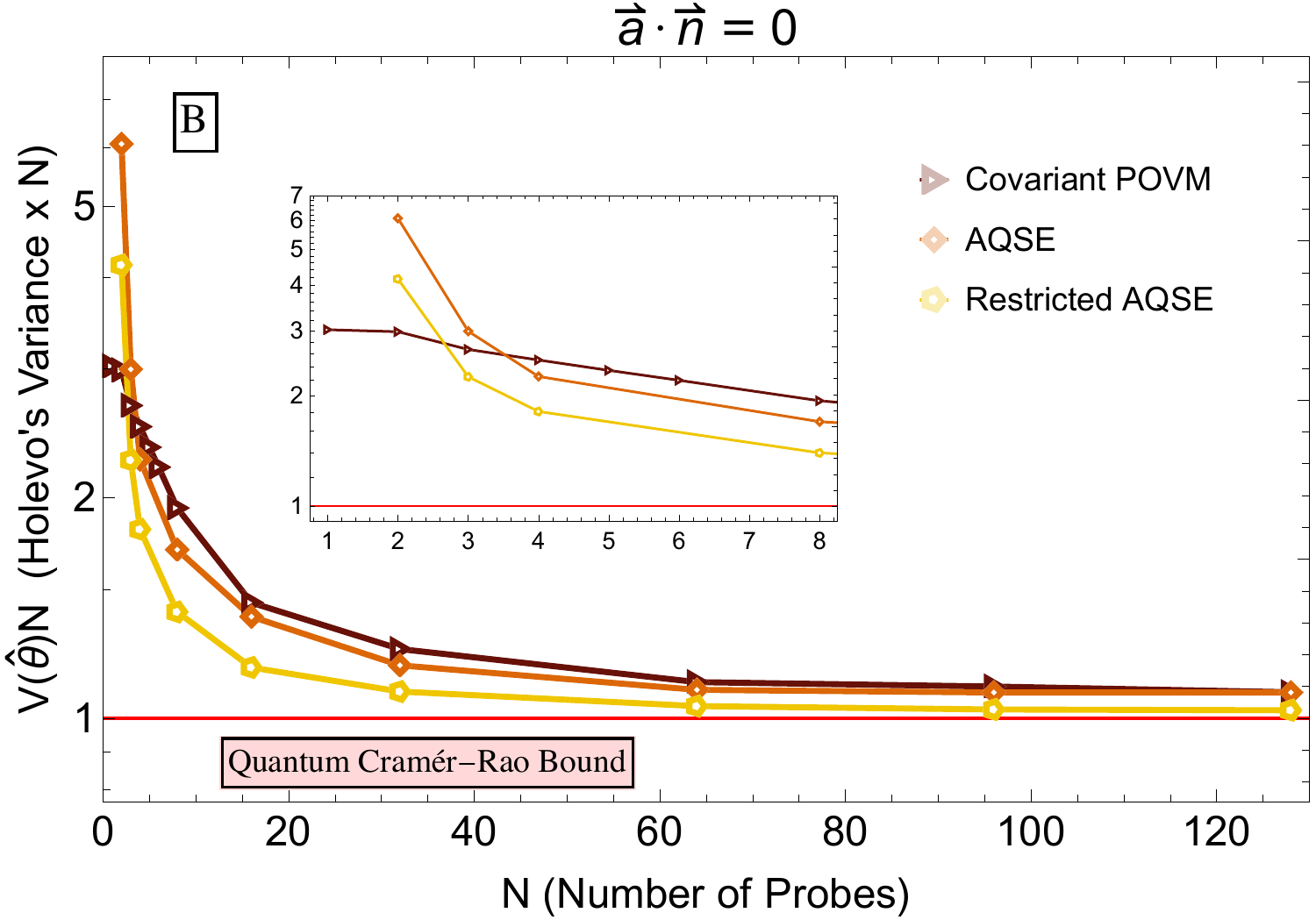}
  \caption{Result of Holevo's variance as a function of the number of probes for  different estimation strategies and for two different initial conditions:  $\vec{a}\cdot\vec{n}=0.5$, corresponding to the smallest quantum Fisher information (shown in panel A) and  $\vec{a}\cdot\vec{n}=0$ (shown in panel B). In all cases we used $10000$ sequences for each number of probes. As we can see the  restricted AQSE performs much better than the AQSE. We have also included the strategy based on the optimal covariant POVM $M_{*}$ as a benchmark. Notice that for both initial  conditions the AQSE does not reach the QCRB, while for optimal initial condition  the sequence of covariant POVMs is able to  attain the QCRB (see panel B).}
  \label{Fig:S1}
\end{figure}
Finally, in Fig.~\ref{Fig:S1}, we have also compared the performance of AQSE with the optimal covariant inference. In this case,  the sequence of $M_{*}$ measurements has a better scaling for a small number of measurements (less than $4$). Then, we can also conclude that the optimal covariant measurement is the best independent strategy in a regime of small $N$. This is expected because the covariant measurement is built to minimize the MSE for one measurement.\\
In the next section, we propose an estimation scheme that uses the covariant estimation and the AQSE method to avoid the problem of non-identifiable likelihood functions and thus is able to saturate the QCRB for any initial condition.

\section{Adaptive estimation scheme with confidence intervals}
\label{lb:arbitrary_initial_condition}
From the previous discussion it seems clear that, conditioned of knowing in which interval the actual phase lies in and restricting the likelihood to that interval so it becomes identifiable, the AQSE method converges efficiently to the QCRB. With this in mind, we propose the following two-step scheme which allows us to initially guess the interval, thus making  the iterative scheme a regular problem. \\
Our scheme consists of two main steps. In the first step one uses a sequence of  independent optimal covariant measurements to produce a confidence interval (CI) using the maximum likelihood estimate $\hat{\theta}_{\text{MLE}}$. The CI  gives a range of plausible values where the unknown parameter most likely lies in. In the second step, one applies the AQSE method restricted to the confidence interval. Furthermore, one is able to update the center of the CI by the MLE of each previous step. The idea of using CIs to improve the error in estimations is exemplified, in a different context, in \cite{Boxio}.

Let us assume that we have $N$ copies of the state system $\rho$, so that each measurement is performed consecutively over each copy. Let us denote the result of each measurement as $x_i$. In the first part of our scheme we apply $N_1$ covariant measurements $\mathcal{M}_1$ given by Eq.  (\ref{eq:Mstar}), that is
\begin{equation}
  \label{eq:propPOVM1}
  \mathcal{M}_1 := M_{*}(d x_k) \, ,
\end{equation}
and then we construct the MLE
\begin{equation}
  \label{eq:mle_cov}
  \hat{\theta}_{\text{MLE}} = \text{arg max}_{\theta \in \Theta}  \prod_{i=1}^{N_1} p \left( d x_i \mid \theta \right) = \text{arg max}_{\theta \in \Theta}  \prod_{i=1}^{N_1}   \frac{d x_i}{2 \pi} \left( 1 + \sqrt{F_Q} \cos( x_i - \theta) \right) 
\end{equation}
to obtain the estimation $\hat{\theta}_{\text{MLE}}\left( x_1,...,x_{N_1} \right)$. The CI is then given by  \cite{degroot}:
\begin{equation}
  \label{eq:CI}
  {\rm CI}(\hat{\theta}_{\text{MLE}}(\vec{x})) = \left(\hat{\theta}_{\text{MLE}}(\vec{x}) - c \cdot
    F^{-\frac{1}{2}} , \, \hat{\theta}_{\text{MLE}}(\vec{x}) +c \cdot
    F^{-\frac{1}{2}} \right)\, ,
\end{equation}
where $c$ is the appropriate critical value in the standard normal distribution (e.g. $1.96$ for $95\%$ of confidence or $2.58$ for $99\%$ of confidence) and $F$ is the Fisher information of $M_{*}$ given by Eq.~(\ref{eq:FIMstar}). To determine the minimum sample size $N_1$ in order to get a confidence interval of size $2E$, we use the CRB to write the lower bound
\begin{equation}
  \label{eq:SS}
 N_1\geq \frac{c^2}{F E^2} \,.
\end{equation}
In the second step of our scheme we then use the remainder of the copies, that is  $N_2=N-N_1$, to perform  measurements  using the POVM
\begin{equation}
  \label{eq:propPOV2}
  \mathcal{M}_2 := M_{\hat{\theta}_{\text{MLE}}\left( x_1,...,x_{k} \right) } \,.
\end{equation}
All in all, our two-step method is given by:
\begin{equation}
  \label{eq:mle_prop}
  \hat{\theta}_{\text{MLE}} = \text{arg max}_{\theta \in \Theta} \prod_{i =1}^{N_1} p(d x_i \mid \theta ) \prod_{k=N_1}^{N_2} p( x_{k+1} \mid \theta; \hat{\theta}_{\text{MLE}}(x_1,...,x_k) )\, ,
\end{equation}
with  $\Theta = \mathrm{CI}\left( \hat{\theta}_{\text{MLE}}\left( x_1,...,x_{k}\right) \right)$. Eq. \eqref{eq:mle_prop} is the main result of this work.\\
Before discussing some numerical results of our newly introduced method it is important to discuss the possible sources of errors associated to it. Suppose that we set a confidence level $0\leq C_l\leq 1$ for a given marginal error $E$. Then we can write that
\begin{equation}
  \label{eq:P_error}
  V^H \left( \hat{\theta}_{\text{MLE}} \right) = C_l  \Delta_1 +\left( 1 - C_l \right) \Delta_2\,,
\end{equation}
where the two types of errors $\Delta_1$ and $\Delta_2$ correspond to the error associated to the AQSE method when the CI correctly includes the value of the actual parameters, and when it does not, respectively. Let us call the latter intervals  bad CIs. Further, let us take $E< \frac{\pi}{2}$, so that  likelihood functions used in the AQSE method are identifiable. Thus, the error $\Delta_1$ when the CI includes the value of the parameter is lowered bounded by
\begin{equation}\label{eq:error_delta1}
\Delta_1 \geq \frac{1}{F_Q N_2 + N_1 F(\theta; M_*)}\, .
\end{equation}
To characterize $\Delta_2$, first we note that there are two types of bad CIs. To see this, let $\theta_A$ be the parameter's value and $\theta_B = \theta_A + \pi$. Then, the likelihood function produced by AQSE has two local maxima, one around $\theta_A$, and the other around $\theta_B$. The first type of bad CIs corresponds to those that include the second maximum around $\theta_B$. If one applies AQSE restricted to these CIs then $\theta_{\text{MLE}}$ tends to $\theta_B$. The second type of bad CIs appear when the interval does not include $\theta_B$. In this case, the maximum likelihood estimation produced by AQSE can tend to the point in the interval closest to either $\theta_A$ or to $\theta_B$. When the estimation is closest to $\theta_A$, by updating the  interval's center with the posterior estimates,  one can, in principle, obtain a confidence interval that captures the parameter's value. This explain why we obtain better results updating the center of the confidence interval.

To obtain a bad interval of type one, it is necessary that most of the data obtained from the covariant inference be numbers close to $\theta_B$ (the likelihood function has a global maximum around $\theta_B$).  From Eq.  (\ref{eq:pmstar}), for a covariant sample of size $N$, it follows that the probability of obtaining a likelihood function with the global maximum around
$\theta_B$ is
\begin{equation}
  \label{eq:p_2bci}
  \left[ \int_{\theta_B-\epsilon}^{\theta_B+\epsilon} p\left( d \hat{\theta} \mid \theta \right) \right] ^{N}  = \left[ \int_{\theta_B-\epsilon}^{\theta_B+\epsilon} \frac{d \hat{\theta}}{2 \pi} \left( 1 + \sqrt{F_Q} \cos(\hat{\theta}- \theta) \right) \right] ^{N}\,.
\end{equation}
Thus, for sufficiently large $N$, the above probability can be neglected. In this way, for a confidence level close to $1$, and a sufficient small marginal error $E$, one expects that the number of bad confidence intervals disappear as the number of adaptive steps increases. To illustrate this fact, in Table \ref{tabla2}, we show the number of bad intervals as a function of the adaptive steps in the AQSE part produced by a bootstrap simulation of $10000$ repetitions. Here we have chosen a marginal error of $\frac{\pi}{4}$, with a confidence level of $0.99$. In this case, we need to perform $11$ and $22$ covariant measurements for the initial conditions $\vec{a} \cdot \vec{n}=0$ and $\vec{a} \cdot \vec{n} = 0.5$, respectively, to get the desired CI. For these values, the number of type one bad confidence intervals is negligible. Setting $\epsilon = E$ in (\ref{eq:p_2bci}), the probabilities of obtaining them are $2.3 \times 10 ^{-10}$ and $2.3 \times 10^{-18}$ for these two  initial conditions, respectively.\\
\begin{table}[h]
\begin{center}
\begin{tabular}{|c|c|l|l|c|l|l|}
  \hline
  \multicolumn{1}{|l|}{\textbf{AQSE Steps}} & \multicolumn{3}{c|}{\textbf{\# bad CIs ($\vec{a}\cdot \vec{n}=0$)}} & \multicolumn{3}{c|}{\textbf{\# bad CIs ($\vec{a}\cdot \vec{n}= 0.5$)}} \\ \hline
  0                                            & \multicolumn{3}{c|}{111}                            & \multicolumn{3}{c|}{125}                                \\ \hline
  4                                            & \multicolumn{3}{c|}{50}                             & \multicolumn{3}{c|}{56}                              \\ \hline
  8                                            & \multicolumn{3}{c|}{19}                             & \multicolumn{3}{c|}{20}                               \\ \hline
  16                                            & \multicolumn{3}{c|}{8}                              & \multicolumn{3}{c|}{1}                                \\ \hline
  32                                            & \multicolumn{3}{c|}{6}                              & \multicolumn{3}{c|}{1}                                \\ \hline
  48                                            & \multicolumn{3}{c|}{2}                              & \multicolumn{3}{c|}{0}                                \\ \hline
\end{tabular}
\caption{Number of bad confidence intervals as a function of adaptive steps for
  the AQSE part where the real value of $\theta$ was $\pi$.\label{tab:Fujiwara}}
\label{tabla2}
\end{center}
\end{table} 
Note that,  the distribution for the estimates can be approximated as a mixture of $3$ normal distributions, $\mathcal{N}(\theta, \frac{1}{F_Q N_2 + N_1 F(\theta;  M_*)})$, $\mathcal{N}(\theta + E, \frac{1}{F(\theta, M_*,M)})$, and $\mathcal{N}(\theta - E, \frac{1}{F(\theta, M_*,M)})$, where $M$ is the sequence of AQSE and $E$ is the marginal error. This implies that the lower bound for the Holevo variance, when the center of the CI  is not updated, is given by
\begin{equation}
  \label{eq:BE_prop}
  \begin{split}
    V^H \left( \hat{\theta}_{\text{MLE}} \right) &\geq \frac{1}{F_Q N_2 + N_1 F(\theta;  M_*)} + (1-C_l) E^2\,.
  \end{split}
\end{equation}
The second term on the right-hand side of Eq. (\ref{eq:BE_prop}) dominates when $N_2 \to \infty$. This has a very simple interpretation: if the parameter is outside the confidence interval, making more steps in the AQSE method does not diminish the error. Unfortunately, we are not able to provide a lower bound to the Holevo variance in the case for which the center of the CI is also updated. Nevertheless, as we will see, by setting a confidence level close to $1$, a sufficiently small marginal error $E$, and by updating the center of the CIs, the Holevo variance  approximates the QCRB in a small number of steps. \\
To assess the performance of our method we have performed a Monte Carlo simulations by setting  $E = \frac{\pi}{4}$ and $C_l = .99$. Moreover, we have compared our numerical results with the ideal case ($C_l= 1$) and an estimation strategy for which the center of the CIs is not updated. The results are summarised  in Figures \ref{Fig:prop} and \ref{Fig:prop2} for initial conditions $\vec{a} \cdot \vec{n} = 0$ and $\vec{a} \cdot \vec{n} = 0.5$.\\
In Fig. \ref{Fig:prop} we have considered the initial condition  $\vec{a} \cdot \vec{n} = 0$, so that  $F(\theta; M_*) = 1$, and fixed  $N_1 = 11$. The vertical dashed line at  $N_1 = 11$ separates the point between covariant and AQSE inferences. Notice that in the covariant part, for $N\leq 11$,  both strategies corresponding to either updating or not the CIs centers yields the same result. However, for $N>11$, if the center of the CI is not updated, so that according to  Eq. (\ref{eq:BE_prop}) it second term dominates,  the estimation error grows quickly as the number of measurements increases. If, however, the center of the CIs is updated then the estimation error tends to the QCRB. Similarly  in Fig. \ref{Fig:prop2}, for which we have chosen $N_1 = 22$, $\vec{a} \cdot \vec{n} = 0.5$ and $F(\theta; M_*) = 0.75$, we can observe the same qualitative behaviour,  showing that our estimation strategy tends to the QCRB as the number of steps increases for any set of initial conditions.\\
\begin{figure}[h!]
  \centering
  \includegraphics[scale=0.6]{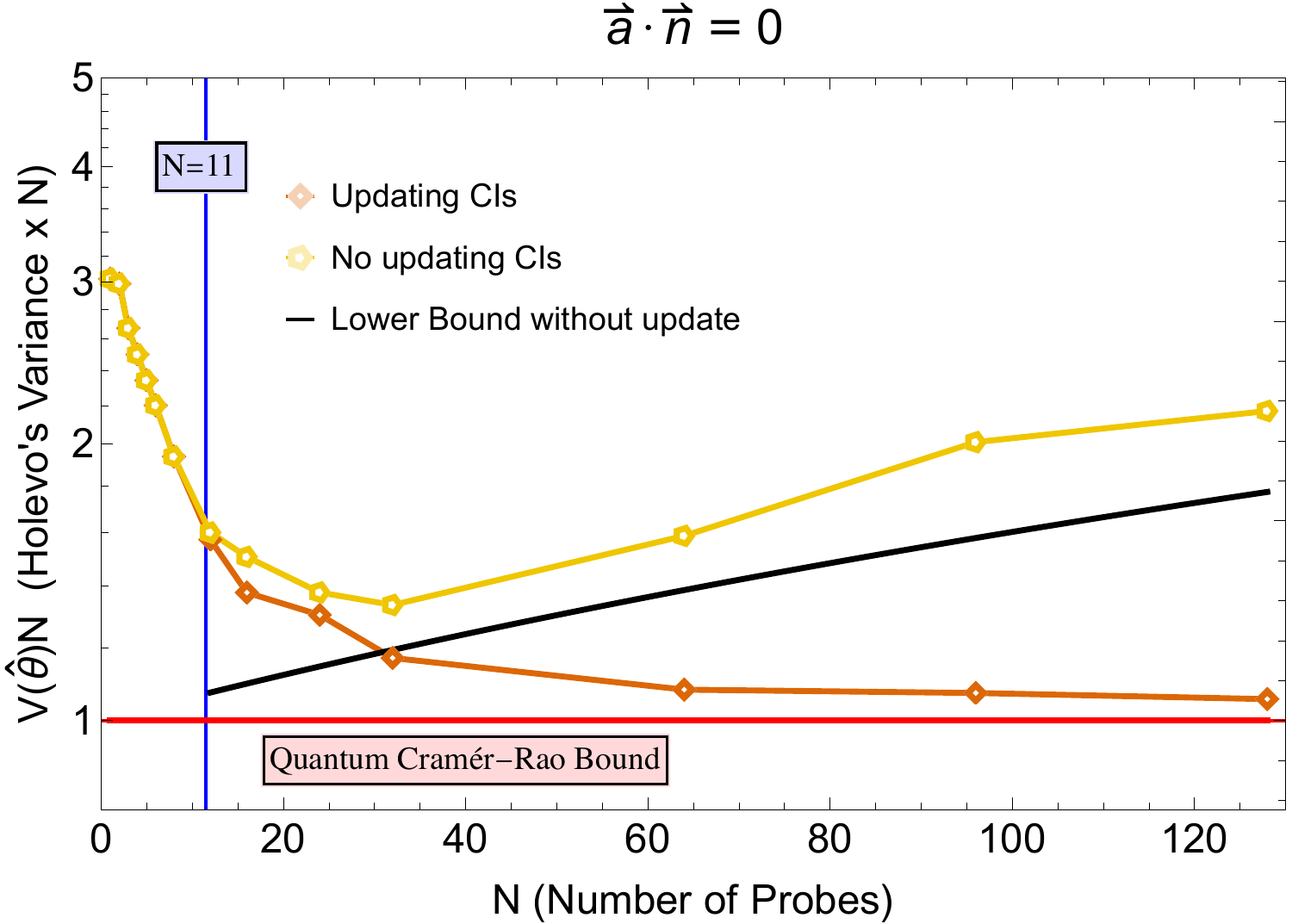}
  \caption{Plot showing Holevo's variance as a number of probes for our scheme. The optimal strategy (orange  rhomboid markers), in which we update the centers of the CIs, tends to the QCRB as the number of probes increases. The latter bound corresponds to take a confidence level of $100\%$. We also show the alternative strategy for which the centers of the CIs are not updated (yellow pentagon markers). This result is lowered bounded by the solid black line, whose formula is given by Eq. \eqref{eq:BE_prop}. We have chosen the initial condition  $\vec{a} \cdot \vec{n} = 0$. Here the vertical blue line at  $N=11$ marks the point separating the covariant inference performance from the AQSE inference to our two-step scheme.}
  \label{Fig:prop}
\end{figure}
\begin{figure}[h!]
  \centering
  \includegraphics[scale=0.6]{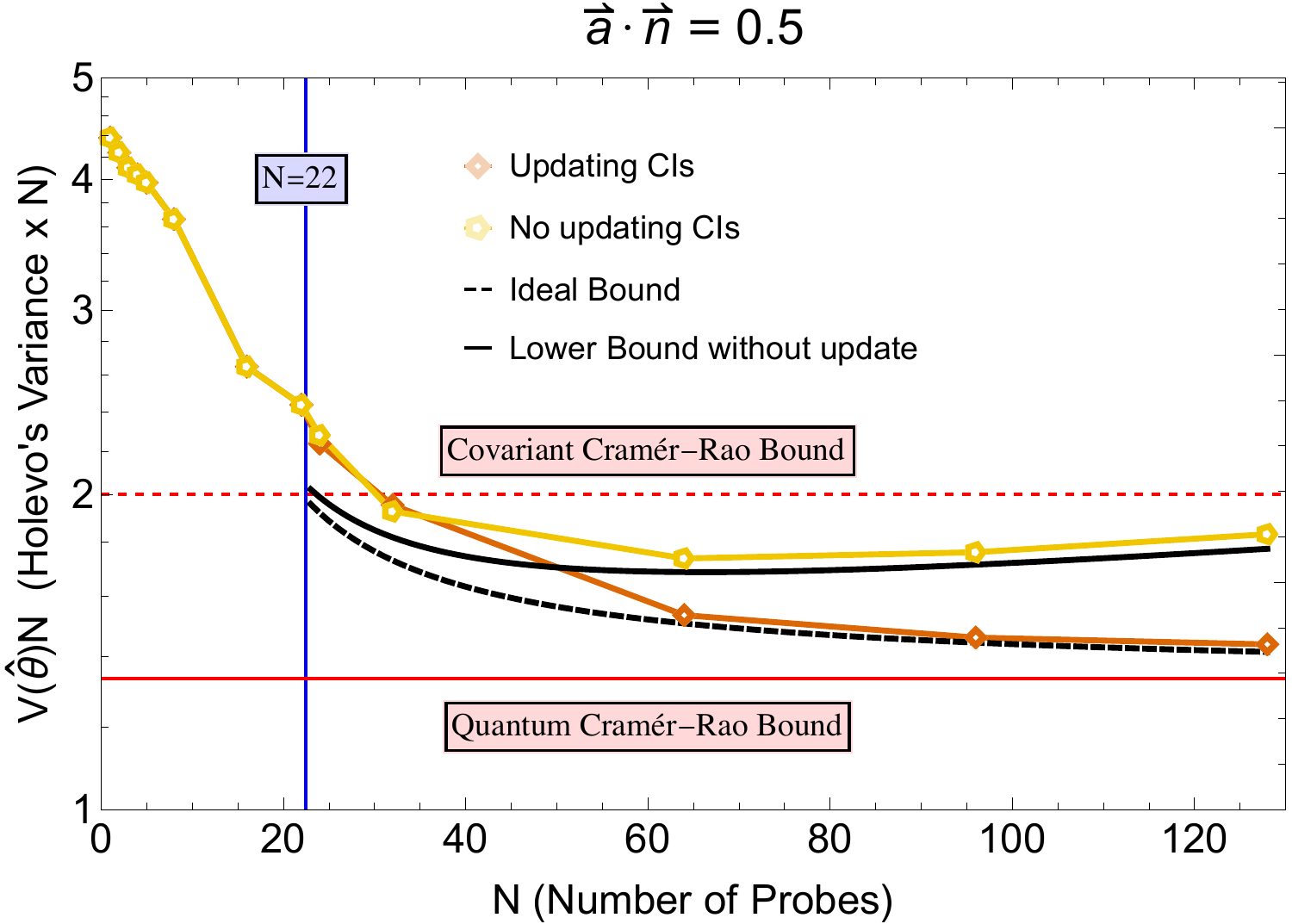}
  \caption{Plot showing Holevo's variance as a function of the number of probes $N$ for our  two-step adaptive scheme. The optimal strategy (orange  rhomboid markers), in which we update the centers of the CIs, tends to the QCRB as the number of probes increases. This result is lowered bounded by the dashed black line, which is the QCRB corrected using Eq. \eqref{eq:BE_prop} for a confidence level of $100\%$. We also show the alternative strategy for which the centers of the CIs are not updated (yellow pentagon markers). This result is lowered bounded by the solid black line, whose formula is given by Eq. \eqref{eq:BE_prop}. In this case, we have chosen the initial condition  $\vec{a} \cdot \vec{n} = 0.5$. Here the vertical blue line at  $N=22$ marks the point separating the covariant inference performance from the AQSE inference to our two-step scheme. Finally, the horizontal dashed red line corresponds to the Cram\'er-Rao bound when only covariant measurements are used instead.}
  \label{Fig:prop2}
\end{figure}
We conclude by showing in Fig.  \ref{Fig:P1} the performance of all the methods we have discussed throughout this work considering the initial conditions $\vec{a} \cdot \vec{n} =0.5$ and $\vec{a} \cdot \vec{n} = 0$, which correspond to the smallest and largest quantum Fisher information, respectively.  Obviously the optimal performance for the different estimation strategies is independent of $\theta$, since the Holevo variance is invariant under translation modulo $2\pi$ over the parametric space. For all the strategies, the behavior is qualitatively the same for any initial condition. As expected, the entangled strategy is the best one and reaches the QCRB. The second best strategy is the restricted ASQE, but this  is unrealistic as it assumes that one knows beforehand an interval where the MLE is regular and includes the parameter. Our method, which we emphasize is the more realistic both mathematically and experimentally, is the third best strategy, as it reduces the error of the estimates improving the performance of AQSE and approximating the estimate to the QCRB. 

\begin{figure}[h!]
  \centering
  \includegraphics[scale=0.47]{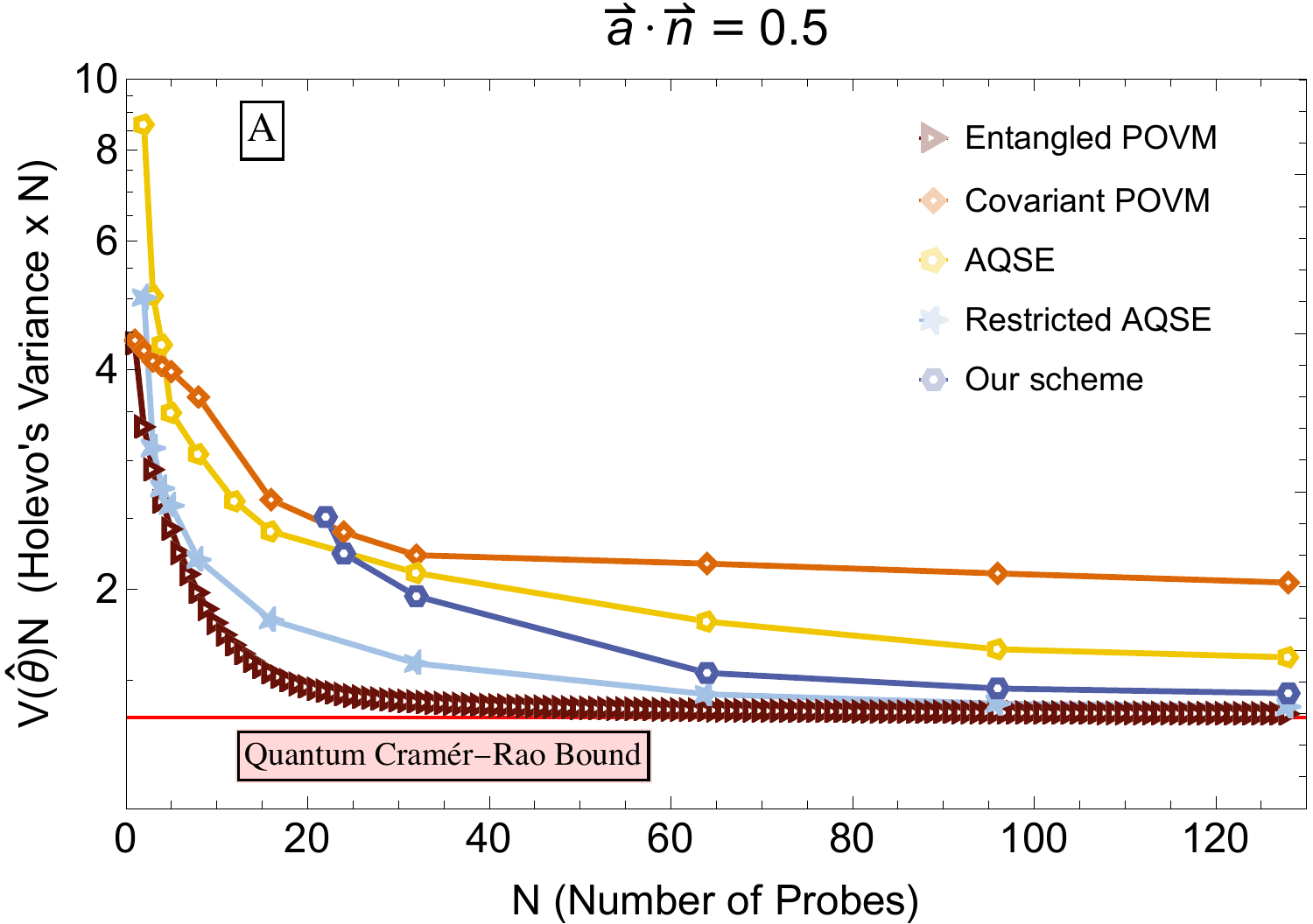}\quad \includegraphics[scale=0.47]{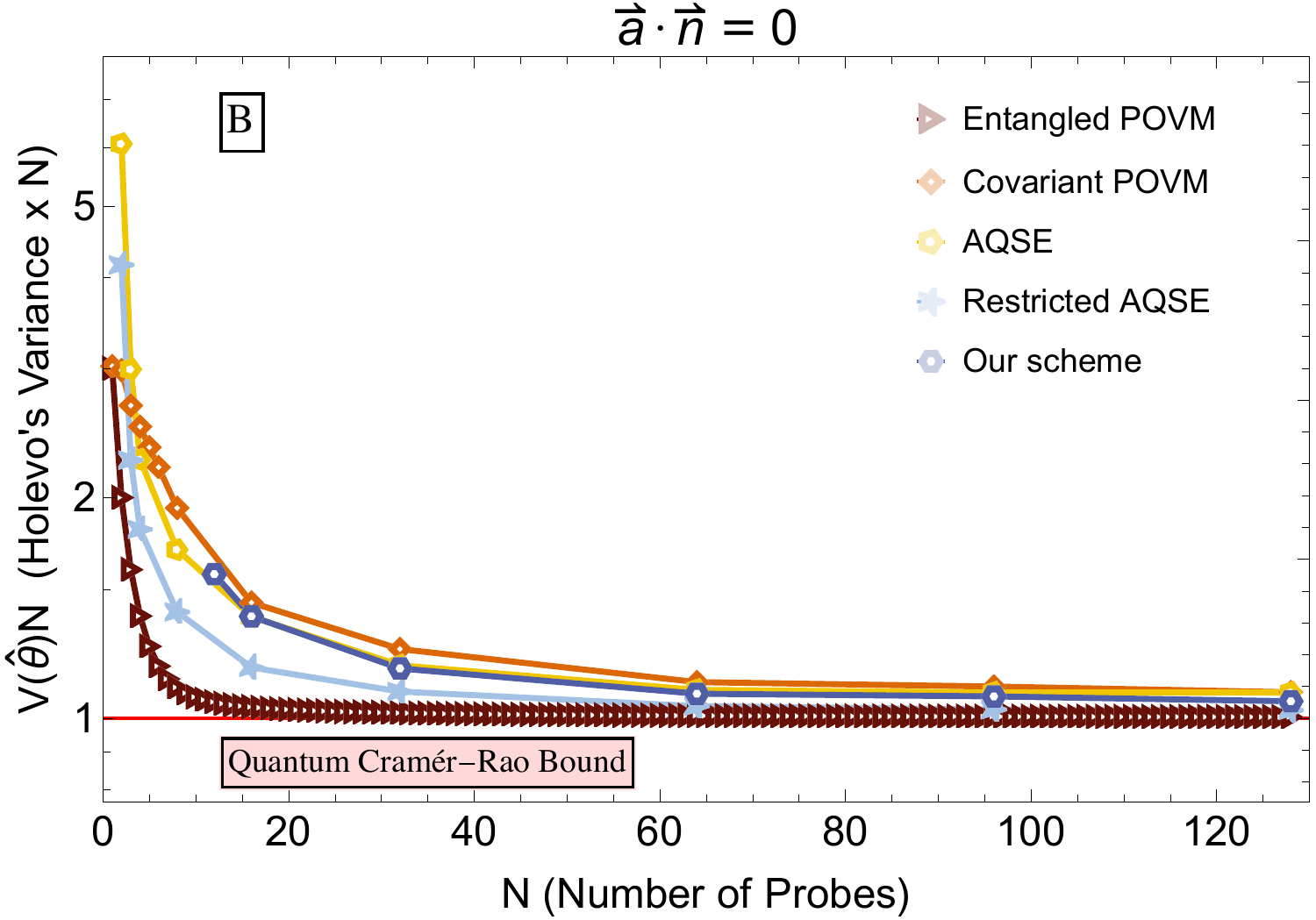}
  \caption{Holevo variance vs the number of probes for different
    strategies of estimation. In order to calculate the Holevo
    variance for the schemes based on the maximum likelihood
    estimator. A sequence of $10000$ measurements were simulated in
    each point. The curve for the entangled measurement was
    analytically calculated. The curve for the proposed scheme shows
    the performance with a $99\%$-confidence intervals. The restricted
    AQSE method assume that the parametric space is an interval of
    length $\pi$ that includes the real value of $\theta$. The best
    strategy is the entangled measurement, followed by the restricted
    AQSE. These two strategies are unrealistic to implement. The third
    best strategy is our proposal. The y-axis is in log scale. }
  \label{Fig:P1}
\end{figure}

\section{Summary and Conclusions}
\label{summary_conclusions}
In this work we have thoroughly examined several strategies for
quantum phase estimation. Their common denominator relies on
maximizing likelihood density functions which are generally
non-identifiable and/or non-optimal. This implies that the QCRB cannot be attained. We have developed a two-step strategy that circumvents this problem. Our method relies first on covariant measurements to identify a confidence interval within which the actual parameter is most likely to be, and  then to apply an adaptive technique restricted to that interval. When compared with the current existing methods, ours is  mathematically more robust, as it reaches the QCRB for any initial condition,  and experimentally is more realistic, since neither an entangled measurement  nor a priori information of the parameter's value are needed. Finally, we believe that our scheme can be generalized to any system so long as one can use a set of measurements to construct confidence intervals.

However, based on the results presented here and the current body of
work, a number of questions remain open. First of all, it is not clear
to us whether there exists a better strategy for which the subset of
covariant measurements is intertwined with the subset of adaptive ones
so as to identify the CI much faster. Secondly, the generalization to
this work to multiparameter estimation does not seem straightforward,
since although it is indeed possible to construct the covariant POVMs
\cite{Holevo2011} for this case, little is known about
non-identifiability for multiparameter likelihood functions. These,
and other issues, will be addressed in forthcoming works.

\section{Acknowledgments}
\label{Acknowledgments}
We thank Laboratorio Universitario de Cómputo de Alto Rendimiento
(LUCAR) of IIMAS-UNAM for their service on information processing.
This research was supported by the Grant No. UNAM-DGAPA-PAPIIT
IG100518 and IG101421, and Doctoral scholarship CONACYT 334231. I.P.C.
acknowledges funding support from the London Mathematical Laboratory
where he is an External Fellow.

\appendix

\section{Code implementation}
\label{reproducibility}
Our numerical results have been implemented using R language. An R library with the various methods discussed here can be publicly found in the repository \cite{Git_Repo}. To reproduce the numerical points for covariant strategy, AQSE and restricted-AQSE presented in Fig. \ref{Fig:P1}, use the function:
\begin{verbatim} 
Hvar_Scheme(theta_real, par_space, n_boost, num_prob, n,a, strategy),
\end{verbatim}
set the value of $\theta$, \verb+theta_real+ from $0$ to $2 \pi$, set \verb+par_space+ as the vector \verb+c(0, 2*pi)+, set the size of bootstrap \verb+n_boost=1000+, vary the desired number of probes \verb+num_prob+, set the initial unitary vectors for the probe and the axis of rotation \verb+n+ \verb+a+, and vary the index \verb+strategy+ from $1$ to $3$. The index $1$
represents the estimation strategy for independent covariant POVMs, $2$ represents the AQSE strategy, $3$ represents the restricted-AQSE strategy. To reproduce the points for our proposed estimation scheme use the function
\begin{verbatim} 
ECI_Hvar(theta_real, par_space, n_boost, num_prob, n,a, C_lev, Margin_Err),
\end{verbatim}
where you have to set the confidence level \verb+C_lev+ and marginal error \verb+Margin_Err+ to the desired values. Finally, to calculate the curve for the entangled estimation strategy use the function
\begin{verbatim} 
Ent_Hvar(theta_real, par_space, n,a, num_prob).
\end{verbatim}

\bibliographystyle{plainnat}
\bibliography{librero1}

\end{document}